\newtheorem{theorem}{Theorem}[section]
\newtheorem{proposition}[theorem]{Proposition}
\newtheorem{corollary}{Corollary}[theorem]
\theoremstyle{remark}\newtheorem{remark}[theorem]{Remark}
\newcommand{\be}{\begin{equation}}
\newcommand{\ee}{\end{equation}}
\newcommand{\C}{\mathcal{C}}
\newcommand{\pd}{\partial}
\newcommand{\II}{\mathcal{I}}
\newcommand{\CC}{\mathcal{C}}
\newcommand{\rev}{\textcolor{black}}
\newcommand\restr[2]{{% we make the whole thing an ordinary symbol
		\left.\kern-\nulldelimiterspace % automatically resize the bar with \right
		#1 % the function
		\littletaller % pretend it's a little taller at normal size
		\right|_{#2} % this is the delimiter
}}
\newcommand{\littletaller}{\mathchoice{\vphantom{\big|}}{}{}{}}
\begin{document}
\title{
Effects of heterogeneous opinion interactions in many-agent systems for epidemic dynamics}

\author{
	Sabrina Bonandin \\
	{\small	Institute of Applied Mathematics (IGPM)},
		{\small RWTH Aachen University, Germany} \\
		{\small\tt bonandin@eddy.rwth-aachen.de} \\
	Mattia Zanella \\
		{\small	Department of Mathematics ``F. Casorati''},
		{\small University of Pavia, Italy} \\
		{\small\tt mattia.zanella@unipv.it} 
		}  
\date{\today}

\maketitle

%%%%%%%%%%%%%%%%%%%%%%%%%%%%%%%%%%%%%%%%%%%%%%%%%%%%%%%%%%%%%%%%%%%%%%%%%%%%%%%%%%%%%%%%%%%%%%%
\begin{abstract}
In this work we define a kinetic model for understanding the impact of heterogeneous opinion formation dynamics on epidemics. The considered many-agent system is characterized by nonsymmetric interactions which define a coupled system of kinetic equations for the evolution of the opinion density in each compartment. In the quasi-invariant limit we may show positivity and uniqueness of the solution of the problem together with its convergence towards an equilibrium distribution exhibiting bimodal shape. The tendency of the system towards opinion clusters is further analyzed by means of numerical methods, which confirm the consistency of the kinetic model with its moment system whose evolution is approximated in several regimes of parameters.  
\medskip

\noindent{\bf Keywords:} kinetic equations; mathematical epidemiology; opinion dynamics; collective phenomena; many-agent systems \\

\noindent{\bf Mathematics Subject Classification:} 35Q84; 82B21; 91D10; 94A17
 \end{abstract}

%%%%%%%%%%%%%%%%%%%%%%%%%%%%%%%%%%%%%%%%%%%%%%%%%%%%%%%%%%%%%%%%%%%%%%%%%%%%%%%%%%%%%%%%%%%%%%%
\tableofcontents

%%%%%%%%%%%%%%%%%%%%%%%%%%%%%%%%%%%%%%%%%%%%%%%%%%%%%%%%%%%%%%%%%%%%%%%%%%%%%%%%%%%%%%%%%%%%%%%
\section{Introduction}
\label{sec:intro}

The mathematical modelling for the spread of infectious diseases \rev{traces} back to the pioneering works of D. Bernoulli and \rev{has} been \rev{made increasingly} more sophisticated over the centuries. Amongst the most influential approaches to mathematical epidemiology, the Kermack-McKendrick SIR model dates back to the first half of the 20th century \cite{KMcK}. In general terms, compartmental modelling relies on the subdivision of the population into epidemiologically relevant groups, where each group represents a stage of progression \rev{in the individual's health} with respect to the transmission dynamics  \cite{Het00}. More recently, several extensions of the SIR-type model have been proposed to incorporate behavioural aspects into these model, see \cite{BDM,BDMDOG,FSJ} and the references therein. However, a complete understanding of the multiscale features of epidemic dynamics should take into account the heterogeneous scales driving the infection dynamics. In this direction, kinetic equations for collective phenomena are capable to link the microscopic scale of agents with the macroscopic scale of  observable data. In particular, suitable transition rates have been determined in relation to emerging social dynamics \cite{DPertTZ}, together with the definition of possible control strategies \cite{DTZ}. 

The study of kinetic models for large interacting systems has gained increasing interest during the last decades \cite{ABB+22,CFTV,HT,PT13}. Amongst the most studied emerging patterns in many-agent systems, aggregation dynamics gained increased interest thanks to its the widespread applications in heterogeneous fields, see \cite{CFL,CG,DMLM,HK,PPDT,SWS}. In particular, a solid theoretical framework suitable for investigating emerging properties of opinion formation phenomena by means of mathematical models has been provided by classical kinetic theory  since the formation of a relative consensus is determined by elementary variations of the agents' opinion converging to an equilibrium distribution under suitable assumptions \cite{CT,DMPW,DW,PTTZ,Tos06,TTZ}.

During the recent pandemic it has been observed how, as cases of infection escalated, the collective adherence to the so-called non-pharmaceutical interventions was crucial to ensure public health in the absence of effective treatments \cite{Gatto,Flocco_etal}. Recent experimental results have shown that social norm changes are often triggered by opinion alignment phenomena \cite{Tu}. In particular, the perceived adherence of individuals’ social network has a strong impact on the effective support of the protective behaviour. Hence, the individual responses to \rev{threats} are a core question to set up effective measures prescribing norm changes in daily social contacts and have deep connection with vaccination hesitancy. With the aim of understanding the impact of opinion formation in epidemic dynamics,  several models have been proposed to determine the evolution of the opinion of individuals on protective measures in a multi-agent system under the spread of an infectious disease \cite{BTZ,Flocco_etal,KGFPN,Zan23}. The study of opinion formation phenomena is also closely connected with the problem of vaccination hesitancy \cite{FPBGB} and the propagation of misinformation on the agents' contact network \cite{PB09}.

 In this work, we concentrate on a kinetic compartmental model to investigate the emergence of collective structures triggered by nonsymmetric interactions between agents in different compartments. In this direction we expand the results in \cite{Zan23} taking advantage of the kinetic epidemic setting developed in \cite{DPertTZ,DPTZ20}. \rev{Indeed, we will show how heterogeneous opinion exchanges in multi-agent systems may lead to the formation of opinion clusters even for unpolarised societies, having a direct impact on the evolution of the epidemic. Furthermore, we}  derive new macroscopic equations encapsulating the effects of opinion phenomena in epidemic dynamics at the level of observable quantities. In particular, we show that the emergence of opinion clusters in the form of bimodal Beta distributions can be ignited by the coupled action of opinion and epidemic dynamics. Furthermore, we \rev{provide} proofs of positivity and uniqueness of the solution for a surrogate Fokker-Planck-type model. 
 
In more \rev{detail}, the paper is organized as follows: in Section \ref{sec:coupledmodel}, we introduce the kinetic compartmental model 
and we derive \rev{its} constituent properties.
Hence, in Section \ref{sec:FP}, we derive a reduced complexity operator of Fokker-Planck type complemented with no-flux boundary conditions to understand the emerging opinion patterns from the many-agent system and we prove the positivity and the uniqueness of the solution to the corresponding Fokker-Planck system. In Section \ref{sec:macro}, we derive a macroscopic system of equations and we exploit the new model to prove that the kinetic epidemic system possesses an explicitly computable steady state. In Section \ref{sec:num}, we perform several numerical experiments based on a recently developed structure preserving scheme. 

%%%%%%%%%%%%%%%%%%%%%%%%%%%%%%%%%%%%%%%%%%%%%%%%%%%%%%%%%%%%%%%%%%%%%%%%%%%%%%%%%%%%%%%%%%%%%%%
\section{The kinetic model}
\label{sec:coupledmodel}

In this section, we introduce a kinetic compartmental model suitable to describe the evolution of opinion of individuals on protective measures in a multi-agent system under the spread of an infectious disease. 

We consider a system of agents that is subdivided into the following four epidemiologically relevant compartments: susceptible ($S$),  individuals that can contract the infection; infected ($I$), infected and infectious agents; exposed ($E$), infected agents that are not yet infectious, and recovered ($R$), agents that were in the compartment $I$ and that cannot contract the infection. We assume that the time scale of the epidemic dynamics is sufficiently rapid, so that demographic effects - such as entry or departure from the population - may be ignored: as a direct consequence, the total population size constant can be considered constant.
In addition, we equip each agent of the population with an opinion variable $w \in [-1,1] = \mathcal I$, where the boundaries of the interval $\mathcal I$ denote the two extreme opposite opinions. In particular, if an individual has opinion $w=-1$, it means that he/she does not believe in the adoption of protective measures (e.g., social distancing and masking), while, on the contrary, $w=1$ is linked to maximal approval of protection. Hereafter, we let $\II = [-1,1]$ be the interval of all admissible opinions. Last, we assume that agents with a high protective behavior are less likely to contract the infection and that the exchange of opinions on protective measures is influenced by the stage of progression in the individual's health.\\

We denote by $f_H = f_H(w,t): [-1,1]\times \mathbb R_+ \to \mathbb R_+$ the distribution of opinions at time $t\ge0$ of agents in the compartment $H \in \CC = \{ S,E,I,R \}$ such that $f_H(w,t)dw$ represents the fraction of agents with opinion in $[w,w+dw]$ at time $t\ge0$ in the compartment $H$. Without \rev{loss} of generality thanks to the conservation of the total population size, we impose that 
\begin{equation}
\label{eq:condition_mass}
\sum_{H \in \CC} f_H(w,t) = f(w,t), \quad \int_\II f(w,t)dw = 1.
\end{equation}

For each time $t\ge0$, we define the mass fraction of agents in $H\in \mathcal C$ and their moment of order $r>0$ to be the quantities
\[
\rho_H(t) = \int_\II f_H(w,t)dw, \quad  \rho_H(t)m_{r,H}(t) = \int_\II w^r f_H(w,t)dw, 
\]
respectively.  In the following, to simplify notations, we will use $m_{1,H}(t) = m_H(t)$ for the (local) mean opinion at time $t\ge 0$ in class $H$. 

The kinetic compartmental model characterising the coupled evolution of opinions and infection is given by the following system of kinetic equations
\begin{equation}
	\label{eq:kin_cs}
	\begin{dcases}
		\pd_t f_S(w,t) = -K(f_S,f_I)(w,t) + \frac{1}{\tau} \sum_{J \in \CC}Q_{SJ}(f_S,f_J)(w,t)\\
		\pd_t f_E(w,t) = K(f_S,f_I)(w,t) -\nu_E f_E +\frac{1}{\tau}\sum_{J \in \CC}Q_{EJ}(f_E,f_J)(w,t) \\
		\pd_t f_I(w,t) = \nu_E f_E(w,t) - \nu_I f_I +\frac{1}{\tau}\sum_{J \in \CC}Q_{IJ}(f_I,f_J)(w,t) \\
		\pd_t f_R(w,t) = \nu_I f_I(w,t) +\frac{1}{\tau}\sum_{J \in \CC}Q_{RJ}(f_R,f_J)(w,t) \\
	\end{dcases}
\end{equation}
for any $w \in \II$ and $t \ge 0$. 
Having a close look at the system, we immediately recognize the presence of two distinct time scales, the scale of epidemiological dynamics and the one characterising opinion formation phenomena. The parameter denotes $\tau > 0$ the frequency at which the agents modify their opinion in response to the epidemic dynamics. In \eqref{eq:kin_cs} we introduced the operators $Q_{HJ}(\cdot, \cdot)$ characterising the thermalization of the distribution $f_H$ towards its local equilibrium distribution in view of the interaction dynamics with agents of compartments $J \in \CC$. Furthermore, in \eqref{eq:kin_cs} the parameter $\nu_E>0$ is determined by the latency and $\nu_I>0$ is  the recovery rate, see e.g. \cite{BCCF19}. Finally,  the operator $K(\cdot,\cdot)$ is the local incidence rate, which is given by
\begin{equation}
	\label{def:K}
	K(f_S,f_I)(w,t) = f_S(w,t) \int_{\II} \kappa(w,w_*) f_I(w_*,t)dw_*
\end{equation}
for any $w \in \II, t \ge 0$ and with $\kappa(\cdot,\cdot)$ being the contact rate between people of opinion $w$ and $w_*$. Several choices can be made to model $\kappa(\cdot,\cdot)$, in the following we will consider the following
\begin{equation}
	\label{def:kappa}
	\kappa(w,w_*) = \frac{\beta}{4^{\alpha}} (1-w)^{\alpha}(1-w_*)^{\alpha},
\end{equation}
where $\beta > 0$ is a baseline transmission rate characterizing the epidemics and $\alpha \ge 0$ is a coefficient linked to the efficacy of the protective measures. The choice in \eqref{def:kappa} synthesize the assumption that agents with opinion close to $-1$, i.e. to non protective behaviour, are more likely to contract the infection. These dynamics have been \rev{proposed} in literature for vaccine hesitancy, see e.g. \cite{PB09}. 

%We remark that, if $w \approx 1$ or $w_* \approx 1$, namely, one of the two agents strongly believes in the adoption of protective measures, then $\kappa(w,w_*) \approx 0$ and $K(f_S, f_I) \approx 0$, that is, the local incidence rate is almost zero: the choices \eqref{def: K}, \eqref{def: kappa} mathematically express the hypothesis - cited at the end of the previous paragraph - that people with a strong opinion on protective measures are less likely to contract the infection.

We may observe that if $\alpha=0$ the transition between compartments is given by the simplified  operator
\begin{equation}
	\label{def:K,alpha=0}
	K(f_S,f_I)(w,t) = \beta f_S(w,t) \rho_I(t),
\end{equation}
in which we do not observe any effect of opinion formation dynamics on the epidemic dynamics. Indeed, a direct integration of \eqref{eq:kin_cs} with respect to the opinion variable gives the classical SEIR model for the system of masses $\rho_J(t)$, $J \in \mathcal C$. 
On the other hand,  the case in which $\alpha=1$ leads to a local incidence rate of the form
\begin{equation}
	\label{def:K,alpha=1}
	K(f_S,f_I)(w,t) = \frac{\beta}{4}(1-w)f_S(w,t)(1-m_I(t))\rho_I(t).
\end{equation}
which highlights the dependence of the transition between epidemiological compartments on the behaviour of infectious agents and in particular on their mean opinion.

\subsection{A kinetic model for opinion formation dynamics}
\label{subs:opinion_model}

Coherently with the modeling approach of \cite{Zan23}, we let the opinion dynamics in kinetic compartmental system \eqref{eq:kin_cs} be described by the kinetic model of continuous opinion formation introduced in \cite{Tos06}. 
The model is based on binary interactions (hence, the mathematical methods we use are close to those used in the context of kinetic theory of granular gases \cite{CIP13}) and assumes that the formation of opinion is made up of two distinct processes: the compromise process, that reflects the human tendency to settle conflicts, and the diffusion process, that comprises all the unpredictable opinion deviations that an agent might have in response to global access to information. \\

We recall that the \rev{novelty} of the model we are proposing (compared to the one of \cite{Zan23}) is that exchange of opinion on protective measures occurs between agents of any compartment. Hence, we consider now two agents, one belonging to compartment $H$, endowed with opinion $w$, and one to compartment $J$, endowed with opinion $w_*$.
The post-interaction opinion pair $(w', w'_*) \in \II^2$ of two interacting agents is given by
\begin{equation}
	\label{eq:binary_rules}
	\begin{cases}
		w' = w + P(w,w_*) (\gamma_J w_* - \gamma_H w)
		+ \eta_H D(|w|) \\
		w'_* = w_*  + P(w,w_*)(\gamma_H w - \gamma_J w_*) w+ \eta_{J} D(|w_*|)
	\end{cases}
\end{equation}
where $P(\cdot,\cdot)$ is the interaction function and $P(\cdot,\cdot)\in[0,1]$, $\gamma_H, \gamma_J \in (0,1)$ are compartment-dependent compromise propensities,  and $\eta_H, \eta_J$ are iid random variables such that $\left\langle \eta_H \right\rangle = \left\langle \eta_J \right\rangle =0$ and variance  $\left\langle \eta_H^2 \right\rangle = \left\langle \eta_J^2 \right\rangle =\sigma^2 >0$. At last, 
the local relevance of the diffusion is given by  $D(w)\ge0$. 
%Let's have a look at the right-hand side of the first equation of \eqref{eq: binary rules} (and analogously for the second). 
We have 
\begin{equation}
\label{eq:average}
\begin{split}
\left\langle w^\prime - w \right\rangle  = -P(w,w_*)(\gamma_H w - \gamma_J w_*) \\
\left\langle w_*^\prime - w_* \right\rangle  = - P(w,w_*)(\gamma_Jw_* - \gamma_H w)
\end{split}
\end{equation}
% is the term related to the compromise propensity of the agents; the difference $\gamma_H w - \gamma_J w_*$ reflects the fact that, 
 therefore, if $\gamma_J w_* > \gamma_H w$, from the first equation in \eqref{eq:average} we have $\left\langle w^\prime - w \right\rangle >0$ implying in average that $ \left\langle w^\prime  \right\rangle > w$. At the same time, from the second equation we get $\left\langle w_*^\prime - w_* \right\rangle <0$, implying in average that $ \left\langle w_*^\prime  \right\rangle < w_*$. 
 %$ \eta_H D(|w|)$  is the term that contains the diffusion effects of external events; we remark that the introduction of two random variables $\eta_H, \eta_J$ reflects the fact that access to information is agent-dependent, while we assume that the variability in such process, that is, the variance of the random variables, is the same ($\sigma^2$).\\
We remark that the assumptions made on $\gamma_H, \gamma_J, \eta_H, \eta_J, P, D$ are not sufficient to guarantee that $w', w'_* \in \II$ unless $\eta_{H}\equiv \eta_J\equiv 0$. A sufficient condition to guarantee that $(w^\prime,w^\prime_*)\in \II^2$ is that two constants $c_H,c_J>0$ exist such that 
\begin{equation*}
	|\eta_{H}| \le c(1-\gamma_H), \qquad |\eta_{J}| \le c(1-\gamma_J),  \end{equation*}
and 
\[
c_HD(|w|) \le 1-|w|, \qquad  c_JD(|w|) \le 1-|w|, 
\]
for any $w \in [-1,1]$. We point the interested reader to \cite{Tos06,Zan23} for a detailed proof. 

Assuming the introduced bounds on the random variables in \eqref{eq:binary_rules} we may determine the evolution of the distribution $f_H(w,t)$, $H \in \mathcal C$, through the methods of kinetic theory for many-agent systems \cite{CIP13}. In particular, the evolution of the kinetic density is obtained by means of a Boltzmann-type equation 
%are able to pass from the microscopic description given by \eqref{eq: binary rules} to the following kinetic equation
\begin{equation}
	\label{eq:opinion_kin_eq}
	\pd_t f_H (w,t)= \sum_{J \in \CC } Q_{HJ}(f_H,f_J)(w,t)
\end{equation}
% As tau is a temporal scale, I may write it down only in the presence of both the epidemiological and opinion dynamics
\rev{with}
\[
\begin{split}
&Q_{HJ}(f_H,f_J)(w,t)\\ 
&\quad=\left\langle \int_{\II} \left(  \frac{1}{{}^\prime\mathcal J} f_H({}^\prime w,t)f_J({}^\prime w_*,t) -  f_H(w,t) f_J(w_*,t) \right) dw_* \right\rangle,
\end{split}
\]
where $({}^\prime w,{}^\prime w_*)$ are pre-interaction opinions generating the post-interaction opinions $(w,w_*)$ and ${}^\prime\mathcal{J}$ is the determinant of the Jacobian of the transformation $({}^\prime w,{}^\prime w_*)\to (w,w_*)$. 

\begin{remark}
%%	\prof{In generale in un articolo ha senso spiegare il ``processo decisionale'' con cui arrivati a modello o non serve?}

\rev{In the microscopic interactions of \cite{Tos06} the terms related to the compromise propensity} are both governed by the same constant $\gamma$. This, in particular, \rev{gives to $\gamma$ the interpretation of a} shared compromise propensity \rev{between the two agents exchanging their opinion}. 
In the compartmental extension of \rev{such} opinion formation model, we \rev{keep this hypothesis by assuming} that each agent of a compartment shares the same compromise propensity. 
%	When we first thought of the extension, we replaced $\gamma$ with $\gamma_{HJ}$. However, the computations of the evolution of the moments' system were inconclusive, unless $\gamma_{HJ} = \gamma$ (compartment-independent), where, on the contrary, the outcomes were trivial. These observations justified the modeling assumption of a compartment-dependent compromise propensity ($\gamma_H, \gamma_J$) carried out in the binary rules \eqref{eq:binary_rules}.
\end{remark}

%\begin{remark}
%	We assume an equal interaction frequency $1/\tau$ among the compartments. This is a simplified but reasonable choice, which is based on the hypothesis that an agent willing to exchange his opinion on protective measures is generally not aware of the health status of the other agent \cite{DPTZ20}.
%	However, this assumption can be easily relaxed by introducing $J$-dependent weights in front of the collision operators $Q_{HJ}$.
%\end{remark}

\subsection{Evolution of observable quantities}
\label{subs:prop_opinion}

In the previous section, we introduced the microscopic model for opinion formation and the
corresponding kinetic equation. In order to derive the surrogate Fokker-Planck equation in Section \ref{sec:FP}, in this subsection, we look at what macroscopic quantities are conserved in time by the model.

Let $\phi(w), w \in \II,$ denote a test function.
The weak formulation of  kinetic equation \eqref{eq:opinion_kin_eq} is given for each $H \in \mathcal C$ by  
\begin{align}
	\frac{d}{dt} \int_{\II} \phi (w) f_H(w,t) dw
	=& \sum_{J \in \CC } \int_\II \phi(w) Q_{HJ}(f_H,f_J)(w,t) dw \nonumber \\
	\label{eq:opinion_weak}
	=& \sum_{J \in \CC} \Bigl< \int_{\II^2} \left[ \phi(w') - \phi(w)  \right] f_H(w,t) f_J(w_*,t)  dw dw_* \Bigr>,
\end{align}
where $\left\langle \cdot \right\rangle $ denotes the expected value with respect to the distribution of the random variable.
Choosing $\phi(w) = 1,w,w^2$, we are able to infer the evolution of observable quantities like \rev{the total number of agents, their mean opinion within each compartment and the second order moment}. 

If $\phi(w) = 1$ we get  the conservation of mass. If $\phi(w) = w$ from \eqref{eq:opinion_weak} we get
\[
\dfrac{d}{dt} (\rho_H m_H) = \sum_{J \in \mathcal C} \int_{\mathcal{I}^2} P(w,w_*)(\gamma_J w_* - \gamma_H w)f_H(w,t)f_J(w_*,t)dw\, dw_*, 
\] 
and the mean opinion is not conserved in time. 
In the simplified case $P\equiv 1$ we get
\[
\dfrac{d}{dt} (\rho_H m_H(t)) = \sum_{J \in \mathcal C} \rho_H \rho_J (\gamma_J m_J(t) - \gamma_H m_H(t)) = \rho_H \left( M(t) - \gamma_H  m_H(t) \right),
\] 
where
\begin{equation}
	\label{def:M}
	M(t) := \sum_{J \in \CC } \gamma_J \rho_J m_J(t) 
\end{equation}
is the total weighted mean opinion at time $t\ge0$. Hence, 
%\info{Nella Sez \ref{sec: macro} potrebbe essere utile avere la definizione di $\Omega_0$ (diversa da $M$ per la presenza dei $\gamma_J$)}
the total mean opinion, that is definied as the sum over the compartments of the local mean opinions, is a conserved quantity since we get
\begin{align*}
	\frac{d}{dt} \left( \sum_{H\in \mathcal C} \rho_H m_H \right) 
	&= \sum_{H\in \mathcal C}  \left( \frac{d}{dt} \rho_H m_H \right) = \sum_{H\in \mathcal C}  \rho_H \left( M - \gamma_H  m_H \right)\\
	&= \sum_{H\in \mathcal C} \rho_H M - \sum_{H\in \mathcal C} \gamma_H \rho_H m_H = \sum_{H\in \mathcal C} \rho_H M - M = 0
\end{align*}
in view of \eqref{eq:condition_mass}. Therefore, \rev{unlike in} \cite{Zan23}, the mean opinion is not conserved for symmetric interaction functions. 
%From now on, we will write $\Omega(t) = \Omega_0$ for any time $t \ge 0$.
If $\phi(w) = w^2$, the evolution of the energy of is given by 
\begin{align*}
	&\frac{d}{dt}\int_\II w^2 f_H(w,t) dw
	= \sum_{J \in \CC } \int_{\mathcal I^2} \left\langle (w')^2-w^2 \right\rangle f_H(w,t) f_J(w_*,t)  dw dw_*  \\
	&= \sum_{J \in \CC} \left[ \int_{\II^2} (\gamma_H^2 P^2(w,w_*) - 2\gamma_H P(w,w_*)) w^2 f_H(w,t) f_J(w_*,t) dw dw_* \right] \\ 
	&+ \sum_{J \in \CC }  \left[ \gamma^2_J  \int_{\II^2} P^2(w,w_*) w^2_* f_H(w,t) f_J(w_*,t) dw dw_* + \sigma^2 \rho_J \int_\II D^2(|w|) f_H(w,t)dw \right] \\
	&+ \sum_{J \in \CC }  \left[ 2 \gamma_J \int_{\II^2} (1-\gamma_H P(w,w_*)) P(w,w_*) w w_* f_H(w,t) f_J(w_*,t) dw dw_* \right].
\end{align*}
As in \cite{Zan23}, we conclude that energy is not conserved by the model.	
In the case of $P \equiv 1$, it can be equivalently written as
\begin{align*}
	\frac{d}{dt} (\rho_H m_{2,H}(t)) & = (\gamma^2_H-2\gamma_H) \rho_H m_{2,H}(t) + \rho_H \sum_{J \in \CC } \gamma_J^2  \rho_Jm_{2,J}(t) \\
	&+ \sigma^2 \int_\II D^2(|w|)  f_H(w,t)dw + 2(1-\gamma_H) \rho_H(t) m_H(t) M(t),
\end{align*}
and, again, we see that it is not conserved.

%Since $|w| \le 1$ and the local mass fractions are bounded, all moments are bounded and, by virtue of Prokhorov theorem, we obtain the existence of $f^{\infty}_H$ (for more details on the existence of such limit, see \cite{CIP13}).

%%%%%%%%%%%%%%%%%%%%%%%%%%%%%%%%%%%%%%%%%%%%%%%%%%%%%%%%%%%%%%%%%%%%%%%%%%%%%%%%%%%%%%%%%%%%%%%
\section{Derivation of a reduced complexity Fokker-Planck model}
\label{sec:FP}

A \rev{closed-form} analytical derivation of the equilibrium distribution of the Boltzmann-type collision operator $Q_{HJ}$ in \eqref{eq:opinion_kin_eq} is difficult. 
%It is intuitive from the expression of the collision operator $Q_{HJ}$ in Subsection \ref{subs: opinion model} that deriving the equilibrium distribution of kinetic model \eqref{eq:opinion_kin_eq} analytically is a very difficult task. 
For this reason, several reduced complexity models have been proposed. In this section, we consider a scaling of compromise and diffusion that has its roots in the so-called grazing collision limit of the classical Boltzmann equation \cite{CIP13, PT13, Tos06}. In the following, we will assume that $\phi$ is 3-H\"older continuous and $\eta_H, \eta_J$ have finite third order moments, see \cite{Tos06}.\\

Let $\epsilon > 0$ be a scaling coefficient. We introduce the following scaling 
\begin{equation}
	\gamma_J \to \epsilon \gamma_J, \quad \sigma^2_J \to \epsilon \sigma^2_J
	\label{eq:qi_regime}
\end{equation}
and, in the time scale $\xi=\epsilon t$, we introduce the corresponding scaled distributions
\[
g_H(w,\xi) = f_H(w,t) = f_H\left(w, \xi/\epsilon \right),\quad H \in \mathcal C.
% \quad \widetilde{\rho}_H(\xi) = \rho_H(\xi/\epsilon), \widetilde{m}_H(\xi) = m_H(\xi/\epsilon).
\]
In the following we will indicate with $\rho_H m_H(\xi) = \int_{\mathcal I} w g_H(w,\xi)dw$. Rewriting the weak formulation \eqref{eq:opinion_weak} of the opinion kinetic equation \eqref{eq:opinion_kin_eq} for the scaled function, we get 
\begin{equation}
	\label{eq:opinion_weak,eps}
	\epsilon \frac{d}{d\xi} \int_{\II} \phi (w) g_H(w,\xi) dw 
	=  \sum_{J \in \CC }  \int_{\II^2} \left\langle  \phi(w') - \phi(w)  \right\rangle  g_H(w,\xi) g_J(w_*,\xi)  dw dw_*.
\end{equation}
Letting $\epsilon \to 0^+$,  we can introduce a Taylor expansion of $\phi$ around $w$ 
\[
\langle \phi(w') - \phi(w) \rangle = \langle w'-w \rangle\dfrac{d}{dw} \phi(w) + \frac{1}{2} \langle (w'-w)^2 \rangle \dfrac{d^2}{dw^2} \phi(w) + \frac{1}{6} \langle (w'-w)^3 \rangle \dfrac{d^3}{dw^3} \phi(\bar{w})
\]
where $\bar{w} \in (\min(w,w'), \max(w,w'))$ and
\begin{align*}
	\langle w'-w \rangle &= \epsilon P(w,w_*) (- \gamma_H w  +  \gamma_J w_*) \\
	\langle (w'-w)^2 \rangle &= \epsilon^2 P^2(w,w_*) (\gamma_H^2 w^2  +  \gamma_J^2 w^2_* - 2  \gamma_H \gamma_J w w_*) + \epsilon \sigma^2 D^2(|w|).
\end{align*}
Plugging these terms in \eqref{eq:opinion_weak,eps} we get 
\begin{equation}
\label{eq:partialFP}
\begin{split}
&\epsilon\dfrac{d}{d\xi} \int_{\mathcal I}\phi(w)g_H(w,\xi)dw \\
%& = \epsilon \sum_{J \in \mathcal C} \rho_J \int_\II \phi'(w) P(|w|)[\gamma_J {m}_J(\xi) - \gamma_H w ]g_H(w,\xi) dw \\
& = \epsilon \sum_{J \in \mathcal C} \int_{\II^2} \phi'(w) P(w,w_*) (-\gamma_H w+ \gamma_J w_*) g_H(w,\xi)g_J(w_*,\xi) dw dw_*\\
&\quad+ \frac{1}{2} \sum_{J \in \mathcal C} \int_{\II^2} \phi''(w) \langle (w'-w)^2 \rangle g_H(w,\xi)g_J(w_*,\xi) dw dw_* \\
&\quad + \sum_{J \in \mathcal C}R_{\epsilon}(g_H,g_J)(\xi),
%\int_{\II^2} \left\langle  \phi(w') - \phi(w)  \right\rangle  g_H(w) g_J(w_*)  dw dw_* \quad \textnormal{of \eqref{eq: opinion weak, eps}} \\
\end{split}
\end{equation}
%where \textcolor{purple}{$\widetilde{m}_J(\xi) = m_J(\xi/\epsilon)$} 
and 
%we identify three terms depending on the derivatives of $\phi$.
%The one depending on $\phi'$ is
%\begin{align*}
%	\int_{\II^2} \phi'(w) &\langle w'-w \rangle g_H(w)g_J(w_*) dw dw_*\\
%	&= \int_{\II^2} \phi'(w) \langle - \epsilon \gamma_H w + \epsilon \gamma_J w_* \rangle g_H(w)g_J(w_*) dw dw_*\\
%	&= -\epsilon \gamma_H \widetilde{\rho}_J(\xi) \int_\II \phi'(w) w g_H(w) dw + \epsilon \gamma_J \widetilde{m}_J(\xi) \widetilde{\rho}_J(\xi) \int_\II \phi'(w) g_H(w) dw\\
%	&= \epsilon \widetilde{\rho}_J(\xi) \int_\II \phi'(w) [\gamma_J \widetilde{m}_J(\xi) - \gamma_H w ]g_H(w) dw.
%\end{align*}\\
%The term dependent on $\phi''(w)$ reads
%\[
%\frac{1}{2} \int_{\II^2} \phi''(w) \langle (w'-w)^2 \rangle g_H(w)g_J(w_*) dw dw_*. 
%\]
%Denoting by $I(\epsilon^2)$ all the integrals that are of the form $c \epsilon^2$, it is equal to 
%\begin{align*}
%	&= \frac{1}{2} \left\{ I(\epsilon^2) + \epsilon \sigma^2 \int_{\I^2} \phi''(w) D^2(w) g_H(w) g_J(w_*) dw dw_* \right\} \\
%	&= 
%	\frac{1}{2} \left\{ I(\epsilon^2) + \epsilon \sigma^2 \widetilde{\rho}_J(\xi) \int_{\I^2} \phi''(w) D^2(w) g_H(w) dw  \right\}.
%\end{align*}
%Last, we have the term dependent on $\phi^{(3)}(\widetilde{w})$:
\[
R_{\epsilon}(g_H,g_J)(\xi) = \frac{1}{6} \int_{\mathcal I^2} \langle (w'-w)^3 \rangle  \phi^{\prime\prime\prime}(\bar{w}) g_H(w) g_J(w_*) dw dw_*. 
\]
%
%All in all, weak formulation \eqref{eq: opinion weak, eps} becomes 
%\begin{align*}
%	\epsilon \frac{d}{d\xi} \int_{\II} \phi (w) g_H(w) dw 
%	&=  \sum_{J \in \CC} \epsilon \widetilde{\rho}_J(\xi) \int_\II \phi'(w) [\gamma_J \widetilde{m}_J(\xi) - \gamma_H w ]g_H(w) dw \\
%	&+   \sum_{J \in \CC} \frac{1}{2} \left\{ I(\epsilon^2) + \epsilon \sigma^2 \widetilde{\rho}_J(\xi) \int_{\II^2} \phi''(w) D^2(w) g_H(w) dw  \right\}\\
%	&+  \sum_{J \in \CC} R^{\epsilon}(g_H,g_J).
%\end{align*}
Hence, we may observe that for each $J \in \mathcal C$, the reminder term is such that 
\[
\dfrac{1}{\epsilon}|R_\epsilon(g_H,g_J)| \to 0,
\]
for $\epsilon\to 0^+$ since $\left\langle\eta_J^3\right\rangle<+\infty$ for all $J \in \mathcal C$. Therefore, in the quasi-invariant scaling, 
%Dividing both members by $\epsilon$ and scaling $M$ \eqref{def: M} under the quasi-invariant regime \eqref{eq: qi regime}
%\begin{equation}
%	\label{eq: M tilde}
%	\widetilde{M}(\xi) := \sum_{J \in \CC } \gamma_J \widetilde{\rho}_J(\xi) \widetilde{m}_J(\xi)
%\end{equation}
%and using that $\sum_J \widetilde{\rho}_J = 1$, we obtain
%\begin{align*}
%	\frac{d}{d\xi} \int_{\II} \phi (w) g_H(w,\xi) dw 
%	&=  \int_\II \phi'(w) [ \widetilde{M}(\xi) - \gamma_H w ]g_H(w) dw \\
%	&+  \sum_{J \in \CC} \left\{ \frac{I(\epsilon)}{2} +  \frac{\sigma^2}{2} \widetilde{\rho}_J(\xi) \int_{\II^2} \phi''(w) D^2(w) g_H(w) dw  \right\}\\
%	&+  \sum_{J \in \CC} \frac{1}{\epsilon} R^{\epsilon}(g_H,g_J).
%\end{align*}
letting $\epsilon \to 0^+$ in \eqref{eq:partialFP}, we get 
%Prima versione raccoglimento
%\begin{equation}
%	\label{eq:FP_opinion,weak}
%	\begin{split}
%	\frac{d}{d\xi} \int_{\II} &\phi(w)g_H(w,\xi) dw \\
%	&= \int_\II \phi^\prime(w) \left( \sum_{J \in \CC} \gamma_J \int_\II  P(w,w_*) w_* g_J(w_*, \xi) dw_* \right) g_H(w, \xi) dw \\
%	&\quad - \int_\II \phi^\prime(w) \gamma_H w \left( \sum_{J \in \CC} \int_\II P(w,w_*) g_J(w_*, \xi) dw_* \right) g_H(w, \xi) dw \\
%	&\quad + \int_\II  \phi^{\prime\prime}(w) \frac{\sigma^2}{2} D^2(w) g_H(w, \xi) dw,
%\end{split}\end{equation}
%Seconda versione raccoglimento
\begin{equation}
	\label{eq:FP_opinion,weak}
	\begin{split}
		&\frac{d}{d\xi} \int_{\II} \phi(w)g_H(w,\xi) dw \\
		&= \int_\II \left\lbrace  \phi^\prime \int_\II P(w,w_*) \sum_{J \in \CC} \left( \gamma_J w_* - \gamma_H w \right) g_J(w_*) dw_* + \phi^{\prime\prime} \frac{\sigma^2}{2} D^2(w) \right\rbrace  g_H(w) dw  
	%		&\quad + \int_\II  \phi^{\prime\prime}(w) \frac{\sigma^2}{2} D^2(w) g_H(w, \xi) dw,
	\end{split}
\end{equation}
%where  $M(\xi)$ has been defined in \eqref{def:M}. 
%\[
%\widetilde{M}(\xi) = \sum_{J \in \CC } \gamma_J {\rho}_J \widetilde{m}_J(\xi).
%\]
Integrating back by parts, in view of the smoothness of $\phi$, we obtain the surrogate Fokker-Planck operator 
\begin{equation}
	\label{eq:FP_opinion} 
	\partial_\xi g_H(w,\xi) = \bar Q_{H}(g_H)(w,\xi)
\end{equation}
where
\[
\begin{split}
&\bar Q_H(g_H)(w,\xi) = \frac{\sigma^2}{2} \pd^2_w \left(D^2(|w|)g_H(w, \xi)\right) \\
&\quad + \pd_w \left( \left(\int_\II P(w,w_*) \sum_{J \in \CC} \left( \gamma_H w - \gamma_J w_* \right) g_J(w_*, \xi) dw_*  \right) g_H(w, \xi) \right)
%= \frac{\sigma^2}{2} \pd^2_w \left(D^2(|w|)g_H(w)\right) +\\ &\quad \pd_w \left( \left( \gamma_H w  \sum_{J \in \CC} \int_\II P(w,w_*) g_J(w_*) dw_*  -  \sum_{J \in \CC} \gamma_J \int_\II  P(w,w_*) w_* g_J(w_*) dw_*  \right) g_H(w) \right)
\end{split}
\]
complemented with the no-flux boundary conditions
\begin{equation}
	\label{eq:BC_FP_opinion}
	\begin{split} 
		\restr{\left(\int_\II P(w,w_*) \sum_{J \in \CC} \left( \gamma_H w - \gamma_J w_* \right) g_J(w_*, \xi) dw_*  \right) g_H(w, \xi)}{w=\pm 1} \\
		+ \restr{\frac{\sigma^2}{2} \pd_w ((D^2(|w|)g_H(w, \xi))}{w=\pm 1} = 0 \\
		\restr{D^2(|w|)g_H(w, \xi)}{w=\pm 1} = 0 
	\end{split}
\end{equation}
\noindent for any $\xi \ge 0$. We observe that these conditions express a balance
between the so-called advective and diffusive fluxes on the boundaries $w = \pm 1$. 

\begin{remark}
In the simplified case in which $P \equiv 1$, using \eqref{eq:condition_mass}, it is straightforward to deduce that Fokker-Planck equation \eqref{eq:FP_opinion} can be rewritten as
\begin{equation}
	\label{eq:FP_opinion,P=1} 
	\partial_\xi g_H(w,\xi) = \frac{\sigma^2}{2} \pd^2_w \left(D^2(|w|)g_H(w, \xi)\right) + \pd_w ((\gamma_H w - M(\xi))g_H(w,\xi)
\end{equation}
where  $M(\xi)$ has been defined in \eqref{def:M}. Therefore, under the additional assumption $\gamma_H = \gamma$ and  $D(|w|) = \sqrt{1-w^2}$, we can compute the explicit steady state of \eqref{eq:FP_opinion,P=1}. Indeed, \eqref{eq:FP_opinion,P=1} simplifies into the following Fokker-Planck-type model
\begin{equation*}
	\frac{\pd g_H(w,\xi)}{\pd \xi} = \frac{\sigma^2}{2} \pd^2_w \left((1-w^2))g_H(w, \xi)\right) + \pd_w \left( (\gamma w-M(\xi)) g_H(w, \xi) \right),
\end{equation*}
where now $M(\xi) = \bar{M} = \gamma \sum_{J \in \mathcal C} \rho_J m_J$ which is a conserved quantity. For large times, we get
\[
g_H^\infty(w) = \rho_H^\infty \dfrac{(1-w)^{-1 + \frac{1-\bar M}{\lambda}}(1+w)^{-1 + \frac{-1+\bar M}{\lambda}}}{\rev{\textrm{B}((1-\bar M)/\lambda, (1+\bar M)/\lambda)}},
\]
where \rev{$\lambda = \sigma^2/\gamma$}. 
\end{remark}

\subsection{Properties of the model}
\label{subs:analyt_FP}

We consider the kinetic compartmental model with opinion formation term given by the derived Fokker-Planck model \eqref{eq:FP_opinion} and where the local incidence rate is given either by \eqref{def:K,alpha=0} or by \eqref{def:K,alpha=1}. Without loss of generality, in the following, we restore the time variable $t\ge0$. We get
\begin{equation}
	\begin{cases}
		\pd_{t} g_S(w,t) = -K(g_S,g_I)(w,t) +\dfrac{1}{\tau} \bar{Q}_S(g_S)(w,t) \\
		\pd_{t} g_E(w,t) = K(g_S,g_I)(w,t) -\nu_E g_E(w,t) + \dfrac{1}{\tau}\bar{Q}_E(g_E)(w,t) \\
		\pd_{t} g_I(w,t) = \nu_E g_E(w,t) - \nu_I g_I(w,t) + \dfrac{1}{\tau}\bar{Q}_I(g_I)(w,t) \\
		\pd_{t} g_R(w,t) = \nu_I g_I(w,t) +\dfrac{1}{\tau} \bar{Q}_R(g_R)(w,t). 
%		g_H(w, \xi = 0) = g^0_H(w) \quad \textnormal{for} \:H \in \CC\\
	\end{cases}
	\label{ivp:FP_system}
\end{equation}
In this subsection we \rev{first prove the positivity of the solution} to \eqref{ivp:FP_system} with no-flux boundary conditions \eqref{eq:BC_FP_opinion}, given positive $g_H(w,0) =  g^0_H \in L^1(\II)$, for all $H \in \mathcal C$. \rev{Then, under the same hypotheses on the initial data, but with the additional assumption of constant interaction forces $ P \equiv 1$, we prove the uniqueness of such solution.}

\paragraph{Positivity of the solution to \eqref{ivp:FP_system}.}
In order to prove the positivity of the solution, we adopt a time-splitting strategy by isolating the opinion dynamics and the epidemiological one. Hence, the first problem is obtained by 
\begin{equation}
	\begin{cases}
		\pd_t g_H(w,t) = \bar{Q}_H(g_H)(w,t) \\
		%\lim_{w \to -1^+} D^2(|w|)g_H(w,t) = \lim_{w \to 1^-} D^2(|w|)g_H(w,t) = 0, \\	
		\rev{g_H(w, 0) = g^0_H(w) \quad  \:H \in \CC},\\ 
		\textnormal{No-flux boundary conditions \eqref{eq:BC_FP_opinion}},
	\end{cases}
	\label{ivp:opinion_kin_eq}
\end{equation}
for all $H \in \mathcal C$, while the second one by 
\begin{equation}
	\label{ivp:epid}
	\begin{cases}
		\pd_{t} g_S(w,t) = -K(g_S,g_I)(w,t) \\
		\pd_{t} g_E(w,t) = K(g_S,g_I)(w,t) \\
		\pd_{t} g_I(w,t) = \nu_E g_E(w,t) - \nu_I g_I(w,t) \\
		\pd_{t} g_R(w,t) = \nu_I g_I(w,t) \\ 
		g_H(w, 0) = g^0_H(w) \quad  \:H \in \CC. 
	\end{cases}
\end{equation}

We begin by proving the positivity of the solution to \eqref{ivp:opinion_kin_eq}. We exploit the arguments of \cite{CRS08} and \cite{FMZ23} and derive it as a corollary of the theorem that follows.
\begin{proposition}[\textbf{Non-increase of the $L^1$ norm}]
	\label{th:non-incr_L1norm}
	Let $g_H(w,t)$ be a solution of \eqref{ivp:opinion_kin_eq}. If $g^0_H \in L^1(\II)$, then $\int_{\II} |g_H(w,t)|dw = \|g_H(\cdot, t)\|_{L^1(\II)}$ is non increasing for any $t \ge 0$.
\end{proposition}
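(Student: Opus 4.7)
The plan is to regularize $|g_H|$ by a smooth convex function and then exploit the divergence form of $\bar Q_H$ together with the no-flux conditions \eqref{eq:BC_FP_opinion}. First I would rewrite $\bar Q_H(g_H)=\partial_w F_H[g_H]$ with flux
\begin{equation*}
F_H[g_H](w,t)=\frac{\sigma^2}{2}\partial_w\bigl(D^2(|w|)\,g_H(w,t)\bigr)+A_H(w,t)\,g_H(w,t),
\end{equation*}
where $A_H(w,t)=\int_\II P(w,w_*)\sum_{J\in\CC}(\gamma_H w-\gamma_J w_*)\,g_J(w_*,t)\,dw_*$. The first line of \eqref{eq:BC_FP_opinion} says exactly $F_H[g_H](\pm 1,t)=0$, which will be the crucial fact below.

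Next I pick a smooth convex regularization of the absolute value, for instance $\phi_\epsilon(s)=\sqrt{s^2+\epsilon^2}-\epsilon$, for which $\phi_\epsilon\uparrow|s|$ as $\epsilon\to 0^+$ and $\phi_\epsilon''\ge 0$. Multiplying \eqref{ivp:opinion_kin_eq} by $\phi_\epsilon'(g_H)$, integrating on $\II$, and performing a first integration by parts in which the boundary contributions vanish by no-flux, yields
\begin{equation*}
\frac{d}{dt}\int_\II \phi_\epsilon(g_H)\,dw = -\int_\II \phi_\epsilon''(g_H)\,\partial_w g_H\,F_H[g_H]\,dw.
\end{equation*}
Expanding $F_H$ then isolates the manifestly non-positive diffusive term $-\frac{\sigma^2}{2}\int_\II \phi_\epsilon''(g_H)\,D^2(|w|)\,(\partial_w g_H)^2\,dw$ and a transport-type remainder containing the factor $\phi_\epsilon''(g_H)\,g_H\,\partial_w g_H=\partial_w\Phi_\epsilon(g_H)$, where $\Phi_\epsilon(s)=s\phi_\epsilon'(s)-\phi_\epsilon(s)$. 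A second integration by parts moves the $\partial_w$ off $\Phi_\epsilon(g_H)$ onto $A_H$ and $(D^2)'$.

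For the chosen regularization one checks directly that $0\le\Phi_\epsilon(s)\le\epsilon$ uniformly in $s\in\mathbb{R}$. Using the boundedness of $P$, the smoothness of $D$ and the uniform control of the masses $\int_\II g_J\,dw_*$ (which in turn bounds $A_H$ and $\partial_w A_H$), the full transport-type contribution is therefore $O(\epsilon)$ on any bounded time interval. Integrating the resulting differential inequality between $s$ and $t$, then letting $\epsilon\to 0^+$ and invoking monotone convergence on the left-hand side, gives $\|g_H(\cdot,t)\|_{L^1(\II)}\le\|g_H(\cdot,s)\|_{L^1(\II)}$ for all $0\le s\le t$, which is the asserted non-increase.

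The main obstacle is the transport-type term, which carries no obvious sign: the trick is the algebraic identity $\phi_\epsilon''(g_H)\,g_H\,\partial_w g_H=\partial_w\Phi_\epsilon(g_H)$ together with the uniform bound $\Phi_\epsilon\le\epsilon$, so that after a second integration by parts its contribution disappears in the limit. A secondary technical point is the justification of the integrations by parts at the level of weak solutions; this is standard for Fokker-Planck equations with bounded drift and non-degenerate interior diffusion, or one may argue on a regularized equation (e.g.\ adding a small $\delta\partial_w^2$) and pass to the limit at the end.
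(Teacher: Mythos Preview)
Your proposal is correct and follows essentially the same route as the paper. Your regularization $\phi_\epsilon(s)=\sqrt{s^2+\epsilon^2}-\epsilon$ is a concrete choice of the paper's $|\cdot|_\epsilon$ (with $\phi_\epsilon'=\mathrm{sign}_\epsilon$), your identity $\phi_\epsilon''(g_H)\,g_H\,\partial_w g_H=\partial_w\Phi_\epsilon(g_H)$ with $\Phi_\epsilon(s)=s\phi_\epsilon'(s)-\phi_\epsilon(s)$ is exactly the paper's identity $g_H\,\mathrm{sign}_\epsilon'(g_H)\,\partial_w g_H=\partial_w[g_H\,\mathrm{sign}_\epsilon(g_H)-|g_H|_\epsilon]$, and your uniform bound $0\le\Phi_\epsilon\le\epsilon$ is a quantitative version of the paper's pointwise limit $g_H\,\mathrm{sign}_\epsilon(g_H)-|g_H|_\epsilon\to 0$.
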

\begin{proof}
	Let $\epsilon >0$. We denote by $\text{sign}_{\epsilon}(g_H)$ a regularized increasing approximation of the sign function (e.g., a sigmoid, such as the hyperbolic tangent) and by $|g_H|_{\epsilon}$ the regularized approximation of $|g_H|$ via the primitive of $\text{sign}_{\epsilon}(g_H)$.\\
	Given weak formulation \eqref{eq:FP_opinion,weak}, we introduce for $w \in \II$
	\begin{equation*}
		A(w, t) = \sum_{J \in \CC} \int_\II P(w,w_*)  \left( \gamma_J w_* - \gamma_H w \right) g_J(w_*) dw_*, \quad B(w) = \frac{\lambda}{2}D^2(|w|).
	\end{equation*}
	Hence, we obtain
	\begin{equation*}
		\frac{d}{dt} \int_{\II} \phi(w) g_H(w,t) dw = \int_{\II} \left[ A(w, t) \phi^\prime(w) + B(w) \phi^{\prime\prime}(w) \right] g_H(w,t) dw.
	\end{equation*}
	\rev{If} we choose $\phi(w) = \text{sign}_{\epsilon}(g_H)$ in the above equation and \rev{avoid} the dependence on $w\in\mathcal I$ and $ t\ge0$, we \rev{obtain} 
	\begin{equation*}
		\frac{d}{dt} \int_{\II} \text{sign}_{\epsilon}(g_H) g_H \; dw = \int_{\II} \left[ A \;  \pd_w (\text{sign}_{\epsilon}(g_H)) +  B \; \pd_w^2 (\text{sign}_{\epsilon}(g_H)) \right]  g_H \; dw.
	\end{equation*}
%	The left-hand side becomes 
%	\begin{equation*}
%		\frac{d}{dt} \int_{\II} \text{sign}_{\epsilon}(g_H ) g_H  \; dw =  \frac{d}{dt} \int_{\II} |g_H |_{\epsilon} \; dw,
%	\end{equation*}
%	while the right-hand side reads
We have
\[
	\begin{split}
		&\frac{d}{dt} \int_{\II} |g_H |_{\epsilon} \; dw =  \\
		&\quad \int_{\II} A g_H  \; \text{sign}'_{\epsilon}(g_H ) \; \pd_w g_H  \; dw + \int_\II B g_H  \; \pd_w[ \text{sign}'_{\epsilon}(g_H ) \; \pd_w g_H ] \; dw =  \\
		&\quad \int_{\II} A g_H  \; \text{sign}'_{\epsilon}(g_H ) \; \pd_w g_H  \; dw + \restr{\left[ B g_H  \; \text{sign}'_{\epsilon}(g_H) \; \pd_w g_H \right]}{w = \pm 1} \\
		&\quad - \int_\II \pd_w[Bf] \; \text{sign}'_{\epsilon}(g_H) \; \pd_w g_H \; dw  = \\
		&\quad \int_{\II} A g_H \; \text{sign}'_{\epsilon}(g_H) \; \pd_w g_H \; dw - \int_\II \pd_wB \; g_H \; \text{sign}'_{\epsilon}(g_H) \; \pd_w g_H \; dw \\
		&\quad- \int_\II B \; g_H \; \text{sign}'_{\epsilon}(g_H) \; (\pd_w g_H)^2 \; dw
	\end{split}
	\]
	where we integrated by parts the second addend of the first equation and we used that 
	$\restr{\left[ B g_H \; \text{sign}'_{\epsilon}(g_H) \; \pd_w g_H \right]}{w = \pm 1}$ is vanishing in view of the second no-flux boundary condition in \eqref{eq:BC_FP_opinion}. Observing that $ \pd_w [ g_H \; \text{sign}_{\epsilon}(g_H) - |g_H|_{\epsilon}] = g_H \; \text{sign}'_{\epsilon}(g_H) \; \pd_w g_H $, the weak formulation finally reads
	\begin{equation*}
		\begin{split}
			\frac{d}{dt} \int_{\II} |g_H|_{\epsilon} \; dw = \int_{\II} (A-\pd_w B) \pd_w [ g_H \; \text{sign}_{\epsilon}(g_H) - |g_H|_{\epsilon}] \; dw \\
		- \int_\II B \; g_H \; \text{sign}'_{\epsilon}(g_H) \; (\pd_w g_H)^2 \; dw.
		\end{split}
	\end{equation*}
	Integrating by parts the first addend of the right-hand side and using the first no-flux boundary conditions in \eqref{eq:BC_FP_opinion}, we have that
	\begin{equation*}
		\begin{split}
			\frac{d}{d\xi} \int_{\II} |g_H|_{\epsilon} \; dw = \int_{\II} \pd_w(A-\pd_w B) \;[ g_H \; \text{sign}_{\epsilon}(g_H) - |g_H|_{\epsilon}] \; dw \\
			- \int_\II B \; g_H \; \text{sign}'_{\epsilon}(g_H) \; (\pd_w g_H)^2 \; dw.
		\end{split}
	\end{equation*}
	Therefore, in the limit $\epsilon \to 0^+$ we obtain 
	\begin{equation*}
		\frac{d}{dt} \int_{\II} |g_H(w,t)| \; dw = \frac{d}{dt} \|g_H(\cdot, t)\|_{L^1(\II)} \le 0.
	\end{equation*}
	
\end{proof}
% Corollary just cited in the proof.
%\begin{corollary}[\textbf{$L^1$-contraction}]
%	\label{th:L1contraction}
%	Let $g_H$ and $\bar{g}_H$ be two solutions of \eqref{ivp:opinion_kin_eq}. If $g^0_H, \bar{g}^0_H \in L^1(\II)$, then for any $t \ge 0$
%	\[
%	||g_H(\cdot,t)-\bar{g}_H(\cdot,t)||_{L^1(\II)} \le \|g^0_H-\bar{g}^0_H\|_{L^1(\II)}
%	\]
%\end{corollary}
%\begin{proof}
%	Since $g_H, \bar{g}_H$ are solutions to the initial-value problem \eqref{ivp:opinion_kin_eq} and the Fokker-Planck operator in question is linear, also $g_H - \bar{g}_H$ is a solution to \eqref{ivp:opinion_kin_eq}. 
%	Applying the same steps of the proof of Theorem \ref{th:non-incr_L1norm}, we conclude that for any $t \ge 0$
%	\[
%	\frac{d}{dt} ||g_H(\cdot, t) - \bar{g}_H(\cdot, t) ||_{L^1(\II)} \le 0
%	\]
%	and, hence, the thesis.
%\end{proof}

\begin{corollary}[\textbf{Positivity of the solution to \eqref{ivp:opinion_kin_eq}}]
	\label{th:pos_FP_opinion}
	Let $g_H$ be a solution of \eqref{ivp:opinion_kin_eq}. If $g^0_H \in L^1(\II)$ and $g^0_H(w) \ge 0$, then $g_H(w,\xi) \ge 0$ for any $w \in \II, \xi \ge 0$.
\end{corollary}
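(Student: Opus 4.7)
The plan is to derive positivity as a consequence of Proposition \ref{th:non-incr_L1norm} combined with mass conservation for the Fokker-Planck dynamics \eqref{ivp:opinion_kin_eq}. The key observation is that once we know the $L^1$-norm cannot grow while the total integral is conserved, a nonnegative initial datum cannot develop a negative part without violating one of the two.

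First, I would verify that the evolution preserves the total mass. This follows from testing the weak formulation \eqref{eq:FP_opinion,weak} with $\phi \equiv 1$, or equivalently from integrating \eqref{eq:FP_opinion} over $\mathcal{I}$ and using the no-flux boundary conditions \eqref{eq:BC_FP_opinion}: the boundary terms vanish by construction, so
\begin{equation*}
\frac{d}{dt} \int_\II g_H(w,t)\, dw = 0,
\end{equation*}
and therefore $\int_\II g_H(w,t)\, dw = \int_\II g_H^0(w)\, dw$ for every $t \ge 0$.

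Next, since $g_H^0 \ge 0$ in $\mathcal{I}$, we have $\|g_H^0\|_{L^1(\II)} = \int_\II g_H^0(w)\, dw$. By Proposition \ref{th:non-incr_L1norm} the $L^1$-norm is non-increasing in time, so combining with mass conservation yields
\begin{equation*}
\int_\II |g_H(w,t)|\, dw = \|g_H(\cdot,t)\|_{L^1(\II)} \le \|g_H^0\|_{L^1(\II)} = \int_\II g_H^0(w)\, dw = \int_\II g_H(w,t)\, dw.
\end{equation*}
Since the pointwise inequality $|g_H(w,t)| \ge g_H(w,t)$ always holds, the reverse inequality $\int_\II |g_H|\, dw \ge \int_\II g_H\, dw$ is automatic. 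Hence both integrals coincide, which forces $|g_H(w,t)| = g_H(w,t)$ for almost every $w \in \mathcal{I}$, i.e. $g_H(w,t) \ge 0$ for any $t \ge 0$, proving the claim.

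I do not expect a serious obstacle here: the argument is essentially bookkeeping once Proposition \ref{th:non-incr_L1norm} is in hand. The only delicate point is the appeal to mass conservation, which requires the boundary terms in \eqref{eq:BC_FP_opinion} to vanish in the sense used in the proof of Proposition \ref{th:non-incr_L1norm}; this is consistent with the regularization by $\mathrm{sign}_\epsilon$ and the passage to the limit $\epsilon \to 0^+$ already carried out there, so no further regularity assumption on $g_H$ is needed beyond what guarantees existence of the $L^1$ solution.
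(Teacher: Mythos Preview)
Your argument is correct and is precisely the standard route taken in \cite{CRS08}, which the paper defers to: combine mass conservation with the non-increase of the $L^1$ norm to force the negative part to vanish. The paper's one-line proof cites the $L^1$-contraction (Corollary~\ref{th:L1contraction}) rather than Proposition~\ref{th:non-incr_L1norm} directly, but since the contraction against the zero solution is exactly the non-increase statement, your proof is essentially the same as the paper's, only spelled out in full.
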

\begin{proof}
		The result follows from the proof presented in \cite{CRS08} and \rev{from Proposition \ref{th:non-incr_L1norm}.}
		\end{proof}

Now can prove the positivity of the solution to \eqref{ivp:epid} by distinguishing the scenarios in which $\alpha=0$ and $\alpha=1$ (and, thus, when $K(f_S,f_I)$ is of form \eqref{def:K,alpha=0} and \eqref{def:K,alpha=1} respectively). 
%We construct the proof exploiting the techniques contained in \cite{FMZ23}.

\begin{proposition}[\textbf{Positivity of the solution to \eqref{ivp:epid}}]
	\label{th:pos_epid}
	Let $g_H$, $H \in \CC$ be a solution of the initial-value problem \eqref{ivp:epid}. 
	If $g^0_H(w) \ge 0$, then $g_H(w,t) \ge 0$ for any $w \in \II, t \ge 0$.
\end{proposition}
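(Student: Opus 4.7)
The plan is to establish positivity of the four distributions by treating \eqref{ivp:epid} as a cascade in which each equation inherits non-negativity from those above it via a Duhamel representation. The only point at which circularity can arise is in the incidence term $K(g_S,g_I)$, which couples $g_S$ with an integral functional of $g_I$: for $\alpha=0$ this functional is $\rho_I(t)$, while for $\alpha=1$ it is $(1-m_I(t))\rho_I(t) = \int_\II (1-w_*)g_I(w_*,t)\,dw_*$. I would therefore isolate this scalar moment and show directly that it remains non-negative.

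First, I would observe that the equation for $g_S$ has the form $\pd_t g_S = -\lambda(w,t)\,g_S$, with $\lambda(w,t) = \beta\rho_I(t)$ when $\alpha=0$ and $\lambda(w,t) = (\beta/4)(1-w)N(t)$, $N(t) := \int_\II (1-w_*) g_I(w_*,t)\,dw_*$, when $\alpha=1$. Regardless of the sign of $\lambda$, the representation $g_S(w,t) = g_S^0(w)\exp\!\left(-\int_0^t \lambda(w,s)\,ds\right)$ already yields $g_S(w,t) \geq 0$ since $g_S^0 \geq 0$.

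The key step is to prove $N(t) \geq 0$ (respectively $\rho_I(t) \geq 0$), which is enough to guarantee $\lambda(w,t) \geq 0$ and hence $K(g_S, g_I)(w,t) \geq 0$. For $\alpha=0$ this is the classical positivity of the macroscopic SEIR system obtained by integrating \eqref{ivp:epid} in $w$. For $\alpha=1$ I would introduce the auxiliary moment $P(t) := \int_\II(1-w_*) g_E(w_*,t)\,dw_*$ and, by multiplying the $g_E$ and $g_I$ equations by $(1-w)$ and integrating, derive the linear non-autonomous ODE system
\begin{align*}
\dot P(t) &= a(t) N(t) - \nu_E P(t), & \dot N(t) &= \nu_E P(t) - \nu_I N(t),
\end{align*}
with coefficient $a(t) := (\beta/4)\int_\II(1-w)^2 g_S(w,t)\,dw \geq 0$, thanks to the positivity of $g_S$ established in the previous step. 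This system has a Metzler coefficient matrix and non-negative initial data $N(0), P(0)\geq 0$, so the positive quadrant is invariant and $N(t), P(t) \geq 0$ for all $t\ge 0$.

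Once $K(g_S, g_I)\geq 0$ is established, the remaining positivity is a routine Duhamel cascade: $g_E(w,t) = g_E^0(w) e^{-\nu_E t} + \int_0^t e^{-\nu_E(t-s)} K(g_S,g_I)(w,s)\,ds \geq 0$; next $g_I(w,t) = g_I^0(w) e^{-\nu_I t} + \nu_E \int_0^t e^{-\nu_I(t-s)} g_E(w,s)\,ds\geq 0$; and finally $g_R(w,t) = g_R^0(w) + \nu_I\int_0^t g_I(w,s)\,ds \geq 0$. The principal obstacle is the apparent circularity between $g_I \geq 0$ and $K(g_S,g_I)\geq 0$; this is resolved by extracting the specific moment $N(t)$ of $g_I$ that actually drives $K$ and showing its positivity through a closed linear Metzler system, which is decoupled from the pointwise sign of $g_I$ itself.
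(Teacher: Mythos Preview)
Your argument is correct. The exponential representation for $g_S$ yields non-negativity independently of the sign of the scalar factor, and the crucial observation is that the incidence term depends on $g_I$ only through a single non-negative-weighted moment. Extracting the closed $(P,N)$ system and invoking the forward invariance of the non-negative orthant under a time-dependent Metzler matrix cleanly breaks the apparent $g_I \leftrightarrow K(g_S,g_I)$ circularity; the subsequent Duhamel cascade is then straightforward. The initial data $P(0),N(0)\ge 0$ follow from $g_E^0,g_I^0\ge 0$ together with $(1-w)\ge 0$ on $\II$, and $a(t)\ge 0$ uses only the already-established positivity of $g_S$, so no hidden circularity remains.

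As for the comparison: the paper does not give its own argument here but defers entirely to \cite{BTZ,FMZ23}. Your proof is therefore a self-contained substitute rather than a variation on an exhibited method. The moment-reduction idea you use---replacing the pointwise sign question for $g_I$ by a two-dimensional cooperative ODE for the specific weighted moments that drive $K$---is an elegant device that makes the $\alpha=1$ case just as transparent as the classical $\alpha=0$ SEIR positivity, and it would extend with no change to any $\kappa(w,w_*)$ that is non-negative and separable. One incidental remark: the displayed system \eqref{ivp:epid} in the paper omits the term $-\nu_E g_E$ in the $g_E$ equation (present in \eqref{ivp:FP_system}); your Duhamel formula tacitly restores it, and in any case your argument is robust to this discrepancy since both versions yield a Metzler structure for $(P,N)$ and a non-negative source in the $g_E$ representation.
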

\begin{proof}
		The result follows from the proof presented in \cite{BTZ,FMZ23}.
		\end{proof}

Merging the positivity results in \rev{Corollary} \ref{th:pos_FP_opinion} and Proposition \ref{th:pos_epid} we can provide positivity of the solution to \eqref{ivp:FP_system}.
\begin{proposition}[\textbf{Positivity of the solution to \eqref{ivp:FP_system}}]
	Let $g_H$, $H \in \CC$ be a solution of \eqref{ivp:FP_system}. If $g^0_H \in L^1(\II)$ and $g^0_H(w) \ge 0$, then $g_H(w,\xi) \ge 0$ for any $ w \in \II, \xi \ge 0$.
\end{proposition}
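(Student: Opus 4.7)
The plan is to exploit the time-splitting decomposition of \eqref{ivp:FP_system} already introduced in the setup, combined with the positivity results for the two sub-problems. Fix $T > 0$ and let $\Delta t > 0$ be small. I would define a splitting approximation $g_H^{\Delta t}$ by a Lie--Trotter scheme: starting from $g_H^0 \ge 0$, first advance the opinion-only flow \eqref{ivp:opinion_kin_eq} over $[0,\Delta t]$, then use the output as initial datum for the epidemic-only flow \eqref{ivp:epid} over $[0,\Delta t]$, and repeat this pattern on each subinterval $[k\Delta t,(k+1)\Delta t]$ for $k = 1, \dots, \lfloor T/\Delta t\rfloor$.

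Iterating the two positivity results already established, namely Corollary \ref{th:pos_FP_opinion} for each opinion half-step and Proposition \ref{th:pos_epid} for each epidemic half-step, the approximation inherits $g_H^{\Delta t}(w, k\Delta t) \ge 0$ for every $w \in \II$, every $k$, and every $H \in \CC$, uniformly in $\Delta t$. It then remains to pass to the limit $\Delta t \to 0^+$ and show that $g_H^{\Delta t}(\cdot, t) \to g_H(\cdot, t)$ in $L^1(\II)$, uniformly in $t \in [0,T]$. Since the opinion Fokker--Planck operator is linear and satisfies the $L^1$-contraction property of Corollary \ref{th:L1contraction}, and the bilinear incidence term $K(g_S,g_I)$ is locally Lipschitz on $L^1(\II)$-bounded sets (because the kernel $\kappa$ in \eqref{def:kappa} is bounded on $\II^2$), a standard Trotter--Kato product-formula argument applies. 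The non-negative cone of $L^1(\II)$ is closed, so the limit $g_H(\cdot, t)$ is non-negative on $[0,T]$; arbitrariness of $T$ gives the claim.

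The main obstacle is the convergence step, where the nonlinear coupling through $K(g_S,g_I)$ must be controlled across successive sub-steps of the splitting. Thanks to the conservation of the total mass \eqref{eq:condition_mass} and the boundedness of $\kappa$, uniform a priori $L^1$ bounds on the tuple $(g_S^{\Delta t}, g_E^{\Delta t}, g_I^{\Delta t}, g_R^{\Delta t})$ over $[0,T]$ are available, which suffices to close the consistency estimate and upgrade the pointwise-in-time positivity of the splitting scheme to positivity of the limiting solution of the full coupled system \eqref{ivp:FP_system}.
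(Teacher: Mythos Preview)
Your proposal is correct and follows essentially the same approach as the paper: a Lie--Trotter time-splitting into the Fokker--Planck step \eqref{ivp:opinion_kin_eq} and the epidemiological step \eqref{ivp:epid}, invoking Corollary~\ref{th:pos_FP_opinion} and Proposition~\ref{th:pos_epid} on each substep, and then passing to the limit $\Delta t\to 0^+$. If anything, you are more explicit than the paper about the convergence step (via the $L^1$-contraction, boundedness of $\kappa$, and closedness of the nonnegative cone), which the paper leaves implicit.
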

\begin{proof}
We can discretize equation \eqref{ivp:FP_system}  through a classical splitting method \cite{Temam} in time. We briefly recall the splitting strategy. For any given time $T>0$ and $n\in \mathbb N$, we introduce a time discretization $t^k = k\Delta t$, $k \in [0,n]$ with $\Delta t = T/n>0$. Then we proceed by solving two separate problems in each time step as follows:
\begin{itemize}
\item At time $t=0$ we consider $g_H(w,0) = g_H^0(w)\ge 0$, $g_H^0\in H^1(\mathbb R)$, for all $H \in \mathcal C$. 
\item For $t \in [t^k,t^{k+1}]$ we solve the Fokker-Planck step 
\[
\begin{split}
\partial_t g_H(w,t) &= \bar Q_H(g_H)(w,t),\\
g_H(w,t^k) &= g_H^k(w) 
\end{split}\]
for all $H \in \mathcal C$.
\item The solution of the Fokker-Planck step at time $t^{k+1}$ is assumed as the initial value for the epidemiological step in the same time interval $t \in [t^k,t^{k+1}]$. 
\item For $t\in [t^k,t^{k+1}]$ the epidemiological step is subsequently solved by considering \eqref{ivp:epid}. 
\end{itemize}
The method generates an approximation $g_{H,n}(w,t)$ of the solution to \eqref{ivp:FP_system}, for which properties can be easily derived by resorting to the properties of the Fokker-Planck and epidemiological steps, which are solved in sequence. \rev{Hence, we may proceed as in \cite{APZ} making use of the Trotter's formula, which allows to conclude that 
\[
\lim_{n\to+\infty}g_H^n(w) = g_H(w,t)\ge0
\] 
and this shows the positivity of $g_H(w,t)$. The approach is reminiscent of  the one developed in \cite{DM}.}%Hence, positivity is an immediate  consequence in view of the positivity of both operators involved in the splitting strategy. 
\end{proof}

\paragraph{Uniqueness of the solution to \eqref{ivp:FP_system}.}
\rev{In this subsection we additionally require $P \equiv 1$. Then, Fokker-Planck equation \eqref{eq:FP_opinion} reduces to \eqref{eq:FP_opinion,P=1} and the operator on the right-hand side becomes linear in $g_H$. }
 We remark that the contact rate $\kappa(w,w_*)$ as in \eqref{def:kappa} is bounded. Indeed, 
\rev{$0 \le \kappa(w,w_*) \le\beta$ for any $w,w_*\in [-1,1]$ and any $\alpha\ge 0$}.
We get the following result 
\begin{theorem}[\textbf{Uniqueness of the solution to \eqref{ivp:FP_system}}]
	Let $g_H, \bar{g}_H$, $H \in \CC$ be two solutions of \eqref{ivp:FP_system} \rev{with $P \equiv 1$}. If $g^0_H, \bar{g}^0_H \in L^1(\II)$, then there exists $C^{max} = C^{max}(\beta,\nu_E,\nu_I) > 0$ such that for any $t \ge 0$
	\begin{equation*}
		\sum_{J \in \CC} ||g_H(\cdot,t) - \bar{g}_H(\cdot, t) ||_{L^1(\II)} \le e^{C^{max} t} \sum_{H \in \CC} ||g_H^0 - \bar{g}_H^0 ||_{L^1(\II)}
	\end{equation*}
\end{theorem}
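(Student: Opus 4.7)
The plan is to adapt the regularized sign-function argument developed in Proposition~\ref{th:non-incr_L1norm} and Corollary~\ref{th:L1contraction} to the coupled differences $h_H(w,t) := g_H(w,t) - \bar g_H(w,t)$, $H\in\CC$. Since $K(f,g)(w) = f(w)\int_\II \kappa(w,w_*) g(w_*)\,dw_*$ is bilinear (both for $\alpha=0$ and $\alpha=1$), one has $K(g_S,g_I)-K(\bar g_S,\bar g_I) = K(h_S,g_I) + K(\bar g_S,h_I)$, so that each $h_H$ satisfies a linear problem with the same structure as \eqref{ivp:FP_system}. Multiplying each such equation by $\text{sign}_{\epsilon}(h_H)$, integrating on $\II$ and letting $\epsilon\to 0^+$ as in the proof of Proposition~\ref{th:non-incr_L1norm} reduces the computation of $(d/dt)\|h_H(\cdot,t)\|_{L^1(\II)}$ to controlling only the epidemiological contributions, since the Fokker-Planck piece $(1/\tau)\bar Q_H(h_H)$ yields a non-positive contribution that can be discarded in the upper bound.

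The bilinear terms are bounded using $\|\kappa\|_{L^{\infty}(\II\times\II)} = \beta$: for instance, $\|K(h_S,g_I)\|_{L^1(\II)} \le \beta\,\rho_I\,\|h_S\|_{L^1(\II)} \le \beta\|h_S\|_{L^1(\II)}$ and, analogously, $\|K(\bar g_S,h_I)\|_{L^1(\II)} \le \beta\|h_I\|_{L^1(\II)}$, where $\rho_H(t),\bar\rho_H(t)\le 1$ by positivity and the mass conservation \eqref{eq:condition_mass}. The remaining linear transfer terms with coefficients $\nu_E,\nu_I$ contribute either dissipative terms $-\nu_H\|h_H\|_{L^1(\II)}$ (through $\text{sign}_{\epsilon}(h_H)\,h_H \to |h_H|$) or source terms trivially bounded by $\nu_H\|h_J\|_{L^1(\II)}$ using $|\text{sign}_{\epsilon}|\le 1$. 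Summing the four resulting differential inequalities over $H\in\CC$ and noting that the $\nu_E$ and $\nu_I$ contributions telescope across compartments, one arrives at
\[
\frac{d}{dt}\sum_{H\in\CC}\|h_H(\cdot,t)\|_{L^1(\II)} \le C^{max}\sum_{H\in\CC}\|h_H(\cdot,t)\|_{L^1(\II)},
\]
with $C^{max} = C^{max}(\beta,\nu_E,\nu_I)>0$ (a crude choice $C^{max} = 2\beta+\nu_E+\nu_I$ always works, while exploiting the cancellation fully yields $C^{max}=2\beta$). Gronwall's lemma applied to the scalar quantity $\sum_{H\in\CC}\|h_H(\cdot,t)\|_{L^1(\II)}$ then produces the claimed exponential estimate.

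The main obstacle I anticipate is the rigorous justification of the limit $\epsilon\to 0^+$ in the presence of the cross-compartment coupling: one must verify that terms of the form $\int_\II\text{sign}_{\epsilon}(h_H)\,h_J\,dw$ with $H\ne J$ and the bilinear integrals against $K$ pass to the limit and are controlled by $\|h_J\|_{L^1(\II)}$. Given the a priori $L^1$ bounds and the uniform estimate $|\text{sign}_{\epsilon}|\le 1$, this reduces to a dominated-convergence argument together with careful bookkeeping, and does not require new technical tools beyond those already deployed in Proposition~\ref{th:non-incr_L1norm} and Corollary~\ref{th:L1contraction}.
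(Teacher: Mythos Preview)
Your overall strategy coincides with the paper's: the proof there is essentially a reference to \cite{BTZ,FMZ23} resting on the $L^1$-contraction of Corollary~\ref{th:L1contraction}, the bound $\|\kappa\|_{L^\infty(\II\times\II)}=\beta$, and Gronwall, exactly as you outline; your treatment of the bilinear splitting of $K$ and the telescoping of the $\nu_E,\nu_I$ terms is correct and leads to the sharp constant $C^{max}=2\beta$.

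There is, however, a gap in the claim that ``the Fokker--Planck piece $(1/\tau)\bar Q_H(h_H)$ yields a non-positive contribution''. The operator $\bar Q_H$ defined in \eqref{eq:FP_opinion} is \emph{not} linear: its drift $\mathcal{B}_H[g](w)=\int_\II P(w,w_*)\sum_{J\in\CC}(\gamma_H w-\gamma_J w_*)\,g_J(w_*)\,dw_*$ depends on the whole family $(g_J)_{J\in\CC}$, so the product $\mathcal{B}_H[g]\,g_H$ is quadratic in the state. Hence
\[
\bar Q_H(g_H)-\bar Q_H(\bar g_H)=\tfrac{\sigma^2}{2}\,\partial_w^2\big(D^2 h_H\big)+\partial_w\big(\mathcal{B}_H[g]\,h_H\big)+\partial_w\big(\mathcal{B}_H[h]\,\bar g_H\big),
\]
and only the first two summands are covered by the argument of Proposition~\ref{th:non-incr_L1norm}. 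The extra cross term $\partial_w(\mathcal{B}_H[h]\,\bar g_H)$, with $|\mathcal{B}_H[h]|\le 2\sum_{J}\|h_J\|_{L^1(\II)}$, has no sign after multiplication by $\text{sign}_\epsilon(h_H)$; a naive bound leaves a factor $\|\partial_w\bar g_H\|_{L^1(\II)}$, which is not supplied by the stated $L^1$ hypotheses. The same oversight is implicit in the paper's Corollary~\ref{th:L1contraction} (which asserts linearity), so you are following the paper here---but to make the argument rigorous you need either an a priori $W^{1,1}$/BV bound on one of the two solutions or a separate device for this cross term, and the resulting Gronwall constant would then in general depend on more than $(\beta,\nu_E,\nu_I)$.
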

\begin{proof}
	The result follows from the proof presented in \cite{BTZ,FMZ23}. The proof is based
	\rev{on the fact that $g_H -\bar{g}_H$ is a solution to \eqref{ivp:FP_system}, thanks to the linearity of the Fokker-Planck operator in $g_H$, and that consequently Proposition \ref{th:non-incr_L1norm} may be applied to $g_H -\bar{g}_H$.}
	We remark that, at variance with the the just-mentioned \rev{papers} where the boundness of the contact rate was imposed by the authors, here $\kappa$ is bounded by definition, as shown in the calculations preceding the theorem.
\end{proof}

\section{Evolution of the moment system for the opinion-based SEIR model}
\label{sec:macro}

As remarked in Section \ref{sec:FP}, 
%\prof{Esistenza dello stato stazionario FP dedotta per $\kappa = \beta$. Va bene ricavare sistema dei momenti già in quel caso?} 
the drift term in surrogate Fokker-Planck equation \eqref{eq:FP_opinion} depends on time. 
This makes the mathematical analysis of the corresponding four-equation system in \eqref{ivp:FP_system} more challenging.
As we're interested in drawing conclusions on the macroscopic epidemic trends resulting from the model, in this section, we derive the system for the evolution of the mass fractions and local mean opinions and explain how these can be used to prove that \eqref{ivp:FP_system} possesses an explicitly computable steady state. 
From now on, we restrict to the scenario of constant interaction forces $P \equiv 1$, so that in particular the total mean opinion of the model is conserved as proven in Subsection \ref{subs:prop_opinion}.\\

Let us consider first the case $\alpha=0$. Then, $\kappa(w,w_*) \equiv \beta$ and the local incidence rate $K(f_S,f_I)(w,t)$ is of form \eqref{def:K,alpha=0}. Kinetic compartmental system \eqref{eq:kin_cs} reduces to
\begin{equation}
	\label{eq:kin_cs,alpha=0}
	\begin{dcases}
		\pd_t f_S = -\beta f_S \rho_I +\dfrac{1}{\tau}  \sum_{J \in \{ S,E,I,R \} }  Q_{SJ}(f_S,f_J) \\
		\pd_t f_E = \beta f_S \rho_I -\nu_E f_E + \dfrac{1}{\tau}  \sum_{J \in \{ S,E,I,R \} }  Q_{EJ}(f_E,f_J) \\
		\pd_t f_I = \nu_E f_E - \nu_I f_I + \dfrac{1}{\tau}  \sum_{J \in \{ S,E,I,R \} }  Q_{IJ}(f_I,f_J) \\
		\pd_t f_R = \nu_I f_I + \dfrac{1}{\tau} \sum_{J \in \{ S,E,I,R \} }  Q_{RJ}(f_R,f_J). \\
	\end{dcases}
\end{equation}
Integrating system \eqref{eq:kin_cs,alpha=0} with respect to $w\in \mathcal I$ we get
% variable, the collision integrals vanish 
%\[
%\int_\II \sum_J Q_{HJ}(f_H,f_J)(w,t) dw = \frac{d}{dt} \rho_H(t) = 0, \quad \textnormal{for any class  $H$}
%\]
%(see Section \ref{subs: prop opinion}, conservation of the local mass fractions, obtained by plugging $\phi(w) = 1$ in the weak formulation \eqref{eq: opinion weak}) and the system for the evolution of the mass fractions becomes 
\begin{equation}
	\label{eq:SEIR}
	\begin{dcases}
		\frac{d}{dt} \rho_S(t) = - \beta \rho_S(t) \rho_I(t)  \\
		\frac{d}{dt} \rho_E(t) = \beta \rho_S(t) \rho_I(t) - \nu_E \rho_E(t)\\
		\frac{d}{dt} \rho_I(t) = \nu_E \rho_E(t) - \nu_I \rho_I(t)\\
		\frac{d}{dt} \rho_R(t) = \nu_I \rho_I(t)\\
	\end{dcases}
\end{equation}
which is the classical $SEIR$ compartmental model. 
Multiplying system \eqref{eq:kin_cs,alpha=0} by $w$ and integrating with respect to the $w$ variable, we obtain the system for the evolution of the mean opinions 
\begin{equation}
	\label{eq:mean_op}
	\begin{dcases}
		\frac{d}{dt}(\rho_Sm_S) = - \beta\rho_I\rho_Sm_S +  \dfrac{\rho_S}{\tau}(M(t) - \gamma_S m_S(t)) \\
		\frac{d}{dt}(\rho_Em_E) = \beta \rho_I\rho_Sm_S - \nu_E\rho_Em_E +  \dfrac{\rho_E}{\tau} (M(t) - \gamma_E m_E(t)) \\
		\frac{d}{dt}(\rho_Im_I) = \nu_E \rho_E m_E - \nu_I \rho_I m_I + \dfrac{\rho_I}{\tau}  (M(t) - \gamma_I m_I(t)) \\
		\frac{d}{dt}(\rho_Rm_R) = \nu_I \rho_I m_I(t)  + \dfrac{\rho_R}{\tau} (M(t) - \gamma_R m_R(t))
	\end{dcases}
\end{equation}
where we recall that $M$ is given by \eqref{def:M}. 
%For ease of exposition, let's have a look at the calculations done for compartment $S$: \prof{Qui riporto i calcoli per i 4 compartimenti (come fatto nella tesi) o solo per uno?}
%\begin{align*}
%	\frac{d}{dt} &\underbrace{\int_\II w f_S(w,t) dw}_{\rho_S(t) m_S(t)} = \underbrace{-\beta \rho_S(t) \rho_I(t)}_{\frac{d}{dt} \rho_S(t)} m_S(t) + \underbrace{\sum_J \int_\II w Q_{SJ}(f_S,f_J)(w,t) dw}_{\rho_S(t)[M(t) - \gamma_S m_S(t)]}\\
%	& \iff \rho_S(t)\frac{d}{dt}m_S(t) + m_S(t)\frac{d}{dt}\rho_S(t) =  m_S(t)\frac{d}{dt}\rho_S(t) +\rho_S(t)[M(t) - \gamma_S m_S(t)] \\
%	& \iff \rho_S(t) \frac{d}{dt}m_S(t) = \rho_S(t)[M(t) - \gamma_S m_S(t)]
%\end{align*}
%where in the first row we used that $\rho_S$ is solution to \eqref{eq: SEIR} and that the local mean opinions are not conserved, but satisfy the condition derived in Section \ref{subs: prop opinion} (obtained by plugging $\phi(w) = w$ in the weak formulation \eqref{eq: opinion weak}).

On the other hand, if we let $\alpha = 1$, the local incidence rate $K(f_S,f_I)(w,t)$ is of the form \eqref{def:K,alpha=1} and the kinetic compartmental model \eqref{eq:kin_cs} has the following form 
\begin{equation}
	\label{eq:kin_cs,alpha=1}
	\begin{dcases}
		\pd_t f_S = -\dfrac{\beta}{4}(1-w) f_S(1-m_I) \rho_I +\dfrac{1}{\tau}  \sum_{J \in \{ S,E,I,R \} }  Q_{SJ}(f_S,f_J) \\
		\pd_t f_E =\dfrac{\beta}{4}(1-w) f_S(1-m_I) \rho_I -\nu_E f_E + \dfrac{1}{\tau}  \sum_{J \in \{ S,E,I,R \} }  Q_{EJ}(f_E,f_J) \\
		\pd_t f_I = \nu_E f_E - \nu_I f_I + \dfrac{1}{\tau}  \sum_{J \in \{ S,E,I,R \} }  Q_{IJ}(f_I,f_J) \\
		\pd_t f_R = \nu_I f_I + \dfrac{1}{\tau} \sum_{J \in \{ S,E,I,R \} }  Q_{RJ}(f_R,f_J). \\
	\end{dcases}
\end{equation}
Hence, integrating \eqref{eq:kin_cs,alpha=1} with respect to $w \in \mathcal I$ we get
\begin{equation}
\label{eq:mass,alpha=1}
\begin{cases}
\dfrac{d}{dt}\rho_S(t) &=  - \dfrac{\beta}{4}(1-m_I)(1-m_S)\rho_I \rho_S \\
\dfrac{d}{dt}\rho_E(t) &=\dfrac{\beta}{4}(1-m_I)(1-m_S)\rho_I \rho_S - \nu_E\rho_E \\
\dfrac{d}{dt}\rho_I(t) &= \nu_E \rho_E - \nu_I \rho_I \\
\dfrac{d}{dt}\rho_R(t) &= \nu_I \rho_I, 
\end{cases}
\end{equation}
whose evolution now depends on the first moment of the kinetic densities $f_S(w,t)$, $f_I(w,t)$. A direct inspection on the evolution of the moment system is obtained by multiplying \eqref{eq:kin_cs,alpha=1} by $w \in \mathcal I$ and integrating with respect to the opinion variable to get

\begin{equation}	\label{eq:mean_op,alpha=1}
\begin{cases}
\dfrac{d}{dt}{(\rho_S m_S)} &= - \dfrac{\beta}{4}\rho_I(1-m_I)\displaystyle \int_{\mathcal I} w(1-w)f_S(w,t)dw \\
&+ \dfrac{\rho_S}{\tau}(M(t) - \gamma_S m_S) \\
\dfrac{d}{dt}{(\rho_E m_E)} &=\dfrac{\beta}{4}\rho_I(1-m_I)\displaystyle \int_{\mathcal I} w(1-w)f_S(w,t)dw - \nu_E \rho_E m_E \\
&+ \dfrac{\rho_E}{\tau}(M(t) - \gamma_E m_E) \\
\dfrac{d}{dt}{(\rho_I m_I)}  &= \nu_E \rho_E m_E - \nu_I \rho_I m_I   + \dfrac{\rho_I}{\tau}(M(t) - \gamma_I m_I)\\
\dfrac{d}{dt}{(\rho_R m_R)} &= \nu_I \rho_I m_I + \dfrac{\rho_R}{\tau}(M(t) - \gamma_R m_R) ,
\end{cases}
\end{equation}
which depends on the kinetic density $f_S(w,t)$. Unlike what presented in \cite{Zan23} we cannot rely on a closure strategy since the mean \rev{opinions} are not conserved quantities. 
\rev{
In more details, we observe that Fokker-Planck equation \eqref{eq:FP_opinion,P=1} admits quasi-stationary equilibrium states and that they may be obtained by simply imposing $\pd_\xi g_H(w,\xi) = 0$. 
However, it would not be exact to close the systems with their moments, as, by doing so, we would be closing the system with respect to quantities which are not conserved in time. Indeed, we recall again that our model conserves the total mean opinion $\sum_{H \in \C} \rho_H m_H$, but not the mean opinions $\rho_H m_H$ in each compartment $H$.}

\subsection{Stationary solutions in an explicitly solvable case}
\label{subs:stationary_sol}

We consider the kinetic compartmental model \eqref{eq:kin_cs,alpha=0} where the thermalization operators are now of Fokker-Planck-type. We have
%Let's consider the  Fokker-Planck system associated to the simplified kinetic compartmental model \eqref{eq: kin cs, alpha=0}, obtained by replacing $K$ of form \eqref{def: K,alpha=0} in system \eqref{ivp: FP system}
\begin{equation}
	\label{eq:FP_system,alpha=0}
	\begin{dcases}
		\pd_{t} g_S = -\beta g_S \rho_I +\dfrac{1}{\tau} \bar{Q}_S(g_S) \\
		\pd_{t} g_E = \beta g_S {\rho}_I -\nu_E g_E + \dfrac{1}{\tau} \bar{Q}_E(g_E) \\
		\pd_{t} g_I = \nu_E g_E - \nu_I g_I + \dfrac{1}{\tau}\bar{Q}_I(g_I) \\
		\pd_{t} g_R = \nu_I g_I +\dfrac{1}{\tau} \bar{Q}_R(g_R). \\
	\end{dcases}
\end{equation}
Since $\alpha = 0$ and the system for the evolution of the mass fractions corresponds to the classical $SEIR$ compartmental model, we can use standard results on the large time behaviour of the solution to such model (see for instance \cite{BCCF19, Het00}). In particular, \begin{equation}
	\label{eq:SEIR_limit}
	\lim_{t \to \infty} \rho_S(t) = \rho_S^{\infty} > 0, \;
	\lim_{t \to \infty} \rho_E(t) =  \; \lim_{t \to \infty} \rho_I(t) = 0, \;
	\lim_{t \to \infty} \rho_R(t) = \rho_R^{\infty} > 0
\end{equation}
where $\rho_S^{\infty} + \rho_R^{\infty} = 1$. 
Then, merging the fact that the mass fractions of the exposed and the infected vanish for large times with the evolution of the local mean opinions given by \eqref{eq:mean_op},  in the limit $t \to +\infty$,  we get
\begin{equation*}
\begin{split}
	\rho_S(t)m_S(t) \to \rho_S^\infty m_S^\infty, \\
	\rho_E(t)m_E(t)  \to 0, \\
	\rho_I(t)m_I(t) \to 0, \\
	\rho_R(t)m_R(t) \to \rho_R^\infty m_R^\infty
	\end{split}
\end{equation*}
%hence, that $\lim_{t\to \infty} m_S(t) = m^{\infty}_S, \lim_{t\to \infty} m_R(t) = m^{\infty}_R$, 
with the asymptotic mean opinions $m_S^{\infty}, m_R^{\infty}$ satisfying
\begin{equation*}
	2M^{\infty} - \gamma_S m_S^{\infty} - \gamma_R m_R^{\infty} = 0
%	\underbrace{\iff}_{M^{\infty} = \gamma_S \rho_S^{\infty} m_S^{\infty} + \gamma_R \rho_R^{\infty} m_R^{\infty}} \gamma_S m_S^{\infty} = \gamma_R m_R^{\infty}.
\end{equation*}
where $M^{\infty} = \gamma_S \rho_S^{\infty} m_S^{\infty} + \gamma_R \rho_R^{\infty} m_R^{\infty}$. Therefore, we have
\begin{equation}
	\label{eq:1st_constr}
	\gamma_S m_S^{\infty} = \gamma_R m_R^{\infty}. 
\end{equation}
Furthermore, we know from Subsection \ref{subs:prop_opinion} that the total mean opinion $m = \sum_J \rho_J m_J$ is conserved by the model. This, in particular, implies that
\begin{equation*}
	\rho_S^{\infty} m_S^{\infty} + \rho_R^{\infty} m_R^{\infty} = m. %\quad \text{with $\Omega_0$ given by \eqref{def: Omega}}.
\end{equation*}
%\prof{Qui sarebbe comodo definire $\Omega_0$ nella Sezione \ref{subs: prop opinion}}
Hence, we are able to write $m_S^{\infty}, m_R^{\infty}$ as 
\begin{equation}
	\label{eq:asympt_mean_op}
	m_S^{\infty} = \frac{\gamma_R}{\gamma_R \rho_S^{\infty} + \gamma_S \rho_R^{\infty}} m, \quad
	m_R^{\infty} = \frac{\gamma_S}{\gamma_R \rho_S^{\infty} + \gamma_S \rho_R^{\infty}} m.
\end{equation}
We remark that, once the kinetic compartmental model is complemented with initial conditions, $m, \rho_S^{\infty}, \rho_R^{\infty}$ are quantities that are explicitly computable and, thanks to equation \eqref{eq:asympt_mean_op}, so are $m_S^{\infty}, m_R^{\infty}$.

Finally, for the Fokker-Planck operator with constant interaction $P\equiv 1$ \eqref{eq:FP_opinion,P=1} we get in the limit $\tau \to 0^+$ that the system reaches a  steady state distribution $g^\infty(w) = g^{\infty}_S(w) + g^{\infty}_R(w)$ where $g^\infty_H(w)$, $H \in \{S,R\}$ are determined for any $w \in \II$ as the solutions of the following system of differential equations
\begin{align*}
	& (\gamma_S w - \gamma_S m^{\infty}_S) g^{\infty}_S(w) + \frac{\sigma^2}{2} \pd_w [D^2(|w|)g^{\infty}_S(w)] = 0,  \\
	& (\gamma_R w - \gamma_R m^{\infty}_R) g^{\infty}_R(w) + \frac{\sigma^2}{2} \pd_w [D^2(|w|)g^{\infty}_R(w)] = 0.                                                                                                                                                                                                                                                                                                                                                                                                                                                                                                                                                                                                                                                                                                                                                                                                                                                                                                                                                                                                                                                                                                                                                                                                                                                                                                                                                                                                                                                                                                                                                                                                                                                                                                                                                                                                                                                                                                                                                                                                                                                                                                                                                                                                                                                                                                                                                                                                                                                                                                                                                                                                                                                                                                                                                                                                                                                                                                                                                                                                                        
\end{align*}
\rev{A proof for the existence of such steady state is provided in \cite{DPTZ20} and is based on the Fourier metrics introduced in \cite{BCT,GTW}.
Proceeding} as in \cite{Zan23}, in the relevant case $D(w) = \sqrt{1-w^2}$, the distributions $g^{\infty}_S(w)$ and $g^{\infty}_R(w)$ are explicitly computable \rev{and are of form}
\begin{equation}
	\label{eq:g_inf_S,R}
	g^{\infty}_H(w) = \rho_H^\infty \dfrac{ (1-w)^{-1 + \frac{1-m^{\infty}_H}{\lambda_H}} (1+w)^{-1 + \frac{1+m^{\infty}_H}{\lambda_H}}}{\textrm{B}((1-m_H^\infty)/\lambda_H,(1+m_H^\infty)/\lambda_H)},
\end{equation}
where $\textrm{B}(\cdot,\cdot)$ is the Beta function, $m_H^\infty$ is defined in \eqref{eq:asympt_mean_op} and where we indicated with $\lambda_H = \sigma^2/\gamma_H$,  $H \in\{ S,R\}$. \rev{For a review on other choices} of the diffusion function $D(|w|)$ we refer to \cite{Tos06}. We may observe that $g_H^\infty(w)/\rho_H$ defined in \eqref{eq:g_inf_S,R} is a Beta probability density. Furthermore, we may observe that the global steady state distribution $g^\infty(w)$ may exhibit a bimodal shape. 
% under this scenario is given by 
%\begin{equation}
%	\label{eq: global g inf}
%	g^{\infty}(w) = g^{\infty}_S(w) + g^{\infty}_R(w) \quad \textnormal{for any} \; w \in \II
%\end{equation}
%which we expect to present a bimodal shape under suitable assumptions on the opinion dynamic and epidemic parameters.\\

As argued in \cite{Tos06} a Beta distribution has a peak in $\II$ when $\lambda = \sigma^2/\gamma < 1 - |m|$ and in correspondence to the point 
\[ \bar w = \frac{m}{1-\lambda}.
\]
Therefore, we expect to observe a bimodal shape for $g^{\infty}$ if both $\lambda_S<1-|m^\infty_S|$ and $\lambda_R< 1-|m_R^\infty|$ or, equivalently, if $\sigma^2/\gamma_S<1-|m^\infty_S|$ and $\sigma^2/\gamma_R< 1-|m_R^\infty|$. 
%A beta distribution of type \eqref{eq: g inf S,R} has a peak in $\II$ when $\lambda < 1 - |m|$ (consensus-type dynamics) \cite{Tos06}: hence, we expect to observe a bimodal shape for $g^{\infty}$ \eqref{eq: global g inf} if both $g^{\infty}_S$ and $g^{\infty}_R$ exhibit such dynamics. 
In addition, we recall that $\gamma_S, \gamma_R, m^{\infty}_S, m^{\infty}_R$ are linked by relation \eqref{eq:1st_constr}.
%\[
%\gamma_S m_S^{\infty} = \gamma_R m_R^{\infty} 
%\iff \frac{\sigma^2}{\lambda_S} m_S^{\infty} = \frac{\sigma^2}{\lambda_R} m_R^{\infty} \iff \frac{m_S^{\infty}}{\lambda_S} = \frac{m_R^{\infty}}{\lambda_R}. 
%\]
%%As $\lambda_S, \lambda_R > 0$, $m_S^{\infty}m_R^{\infty}> 0$, hence we the two candidate maxima lies both in the interval $[0,1]$ or $[-1,0]$.
All in all, the five parameters $\sigma^2, \gamma_S, \gamma_R, m^{\infty}_S, m^{\infty}_R$ shall satisfy
\begin{equation}
\label{eq:2max_condition}
\begin{dcases}
	\frac{\sigma^2}{\gamma_S} < 1 - |m_S^{\infty}|\\
	\frac{\sigma^2}{\gamma_R} < 1 - |m_R^{\infty}|\\
	%\frac{m_S^{\infty}}{\lambda_S} = \frac{m_R^{\infty}}{\lambda_R},
	\gamma_S m_S^{\infty} = \gamma_R m_R^{\infty} \quad \text{with $m_S^{\infty}m_R^{\infty}> 0$}
\end{dcases}
\end{equation}
where the constraint on the product $m_S^{\infty}m_R^{\infty}$ comes from the fact that $\sigma^2, \gamma_S, \gamma_R >0$ by  their definition. In the top row of Figure \ref{fig:bimodal_cases} we give two sets of parameters that satisfy the above conditions and for which we see a bimodal shape.
It is interesting to observe that multi-modal distributions are obtained through Beta densities, at variance with \cite{DPTZ20} where multi-modal distributions were obtained through Gamma ones.

\begin{figure}
	\centering
	\subfigure[]{
		\includegraphics[scale = 0.48]{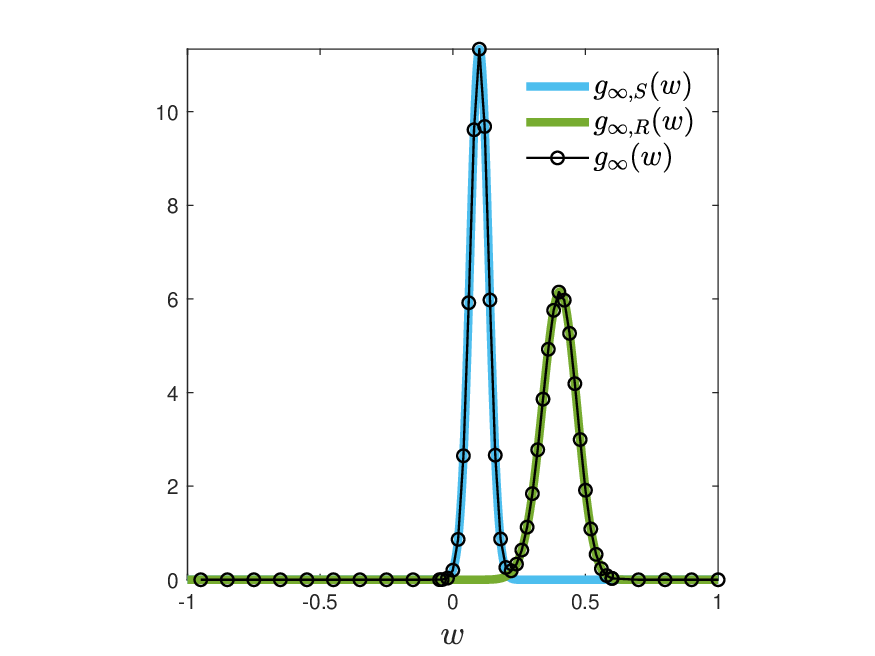}}
	\subfigure[]{
		\includegraphics[scale = 0.48]{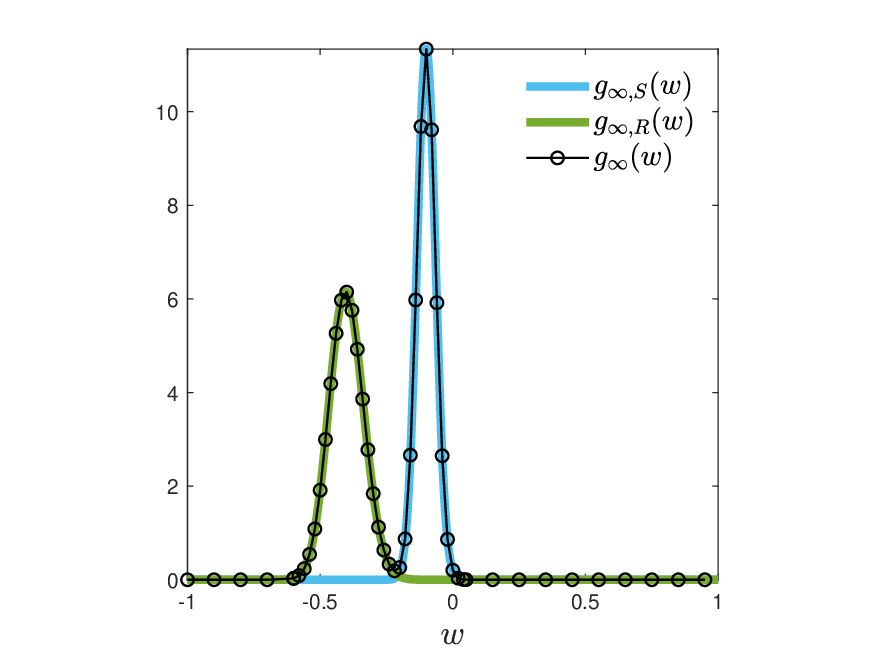}}	\\
	\subfigure[]{
		\includegraphics[scale = 0.48]{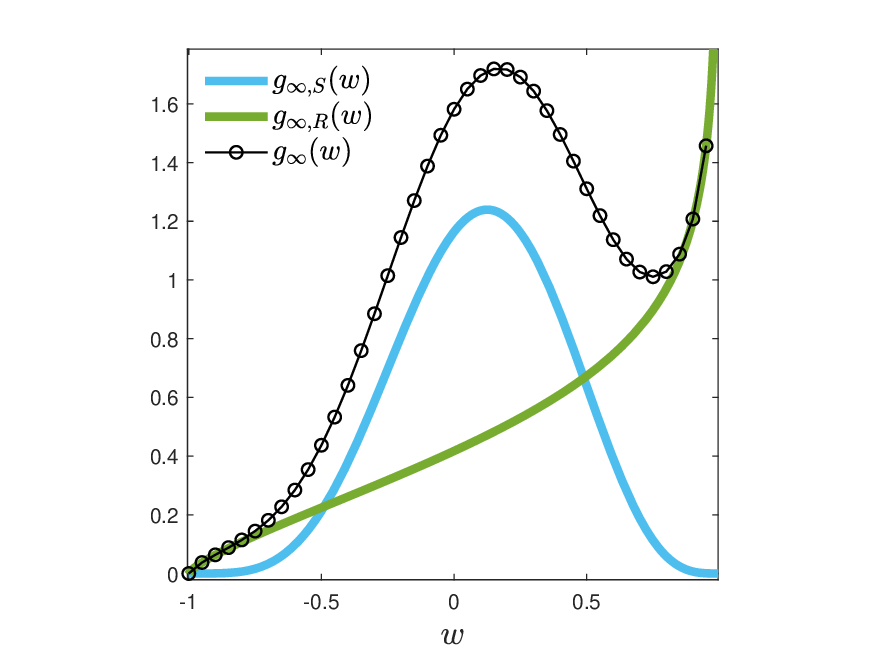}}
	\subfigure[]{
		\includegraphics[scale = 0.48]{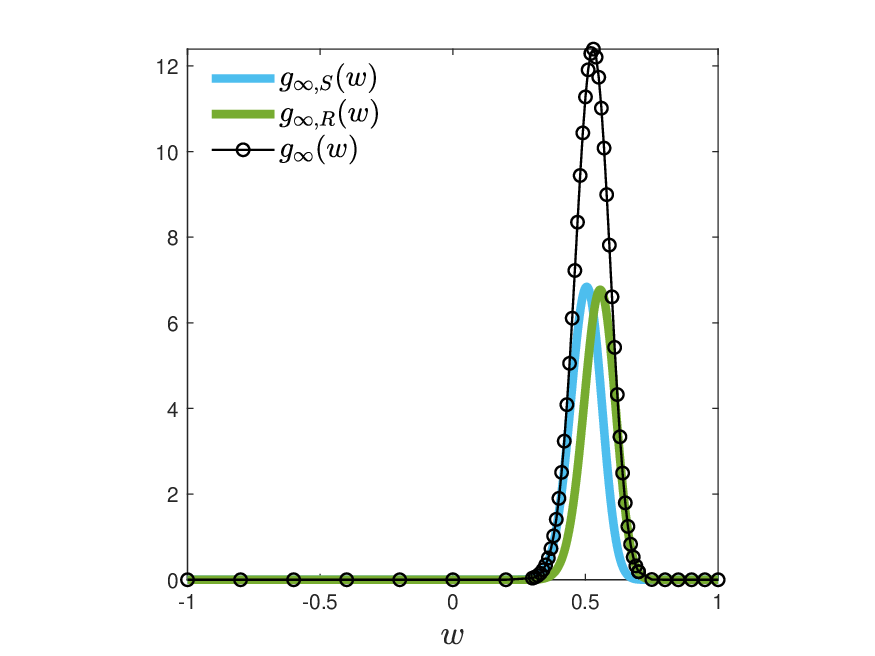}}
	\caption{
		Plot of the global steady state $g^{\infty}$ for various choices of the opinion and epidemiological parameters. In all the plots we fix $\sigma^2 = 10^{-3}$.
		The plot on the top-left corner $(a)$  is obtained by choosing $\gamma_S=0.8,\gamma_R=0.2,m^{\infty}_S=0.1, m^{\infty}_R=0.4$, so that consensus-type dynamics for $S$ and $R$ is observed and \eqref{eq:2max_condition} is verified: as expected $g^{\infty}$ presents two maxima in $\II$. 
		The plot on the top-right corner $(b)$ is obtained with the same choices of compromise-propensity parameters as before and in the case $m^\infty_S = -0.1$, $m_R^\infty = -0.4$. 
		The plot on the bottom-left corner $(c)$  is obtained by choosing the same asymptotic mean opinions as in the plot above it, but with $\gamma_R= 0.0025$ such that the constraint $\sigma^2/\gamma_R < 1 - |m^{\infty}_R|$ is not satisfied (that is, so that $g^{\infty}_R$ exhibits opinion polarization of a society). $\gamma_S$ is then calculated using \eqref{eq:1st_constr}.	
		The plot on the bottom-right corner $(d)$  is obtained by choosing $\gamma_S=0.1,\gamma_R=0.1,m^{\infty}_S=0.5, m^{\infty}_R=0.55$. 
		In this scenario we obtain a uni-modal steady profile and conclude that \eqref{eq:2max_condition} are not a sufficient condition for the existence of two peaks.}
	\label{fig:bimodal_cases}
\end{figure}

Clearly, if either $g^{\infty}_S$ or $g^{\infty}_R$ reveal opinion polarization of a society, then the global steady state has only one maximum in the interval $\II$, as shown, for instance, in the bottom-left corner of Figure \ref{fig:bimodal_cases}.
Finally, a question that arises spontaneous at this point is whether the existence of a maximum for both $g^{\infty}_S$ and $g^{\infty}_R$ implies a bimodal shape for $g^{\infty}$. The answer is negative and a counterexample is presented in the bottom-right corner of Figure \ref{fig:bimodal_cases}.

\begin{remark}
	The Fokker-Planck-type system \eqref{eq:FP_system,alpha=0} we obtained is capable \rev{of exhibiting} the formation of asymptotic opinion clusters even in the case of constant interactions. In opinion-formation phenomena, possible ways to observe the emergence of clusters is typically based on the adoption of bounded-confidence-type interactions functions, see \cite{HK} and \cite{BL,PTTZ} together with the references therein.
%	
%	For instance, in \cite{Zan23}, an interaction function $P = P(w,w_*)$ of the form
%	\[
%	P(w,w_*) = \bbone(|w-w_*|\le \Delta),
%	\]
%	was considered, with $\Delta \in [0,2]$ a confidence threshold above which there is no exchange of opinion between individuals of belief $w$ and $w_*$. It was deduced that, for large times, the distribution of the opinion in a compartment presents several clusters whose number and dimension depend on $\Delta$ and on the initial opinions.
\end{remark}

\begin{remark}
	In this section, we restricted ourselves to the scenario in which $\alpha =0$. Indeed, as remarked in the first part of the section, this simplified assumption allows us to obtain a $SEIR$ model for the evolution of the local mass fractions and, thus, to use the classical results on the behaviour of its solution for large times. However, 
%	\prof{Va bene questo remark? Più o meno dettagli su follow-up?}
	we remark that an open question regards the formation of opinion clusters for $\alpha>0$. 
\end{remark}

%%%%%%%%%%%%%%%%%%%%%%%%%%%%%%%%%%%%%%%%%%%%%%%%%%%%%%%%%%%%%%%%%%%%%%%%%%%%%%%%%%%%%%%%%%%%%%%
\newpage
\section{Numerical results}
\label{sec:num}

In this section, we numerically test the consistency of the proposed modelling approach. Furthermore, we will investigate the impact of opinion segregation features on epidemic dynamics. 
%\prof{Tra i test numerici: non facciamo l'analogo del Test 1 di \cite{DPTZ20} con cui potremmo far vedere che, scegliendo opportuni dati iniziali, siamo in grado di riproddure Figura \ref{fig:bimodal_cases} caso $(a)$?}
From a methodological point of view, to approximate the kinetic SEIR model with Fokker-Planck-type operators, we resort to structure-preserving schemes for nonlinear Fokker-Planck equations \cite{PZ}. These methods are capable of preserving the main physical properties of the equilibrium density, like positivity, entropy dissipation and preservation of observable quantities. 

In more detail, we are interested in the evolution of $f_J(w,t)$, $J \in \mathcal C$, $w \in [-1,1]$, $t\ge0$ solution to \eqref{ivp:FP_system} and complemented by the initial conditions $f_J(w,0) = f_J^0(w)$. We consider a time discretization of the interval $[0,t_{\textrm{max}}]$ of size $\Delta t>0$. We will denote by $f_J^n(w)$ the approximation of $f_J(w,t^n)$. Hence, we may introduce a splitting strategy between the collision step $f_J^* = \mathcal O_{\Delta t}(f_J^n)$
\[
\begin{split}
\partial_t f_J^* &= \dfrac{1}{\tau}\bar Q_J(f_J^*),\\
f_J^*(w,0)  &= f_J^n(w), \qquad J \in \mathcal C,
\end{split}
\]
and the epidemiological step $f_J^{**} = \mathcal E_{\Delta t}(f_J^{**})$
\[
\begin{split}
\partial_t f_S^{**} &= -K(f_S^{**},f_I^{**}) \\
\partial_t f_E^{**} &= K(f_S^{**},f_I^{**}) - \nu_E f_E^{**} \\
\partial_t f_I^{**} &= \nu_E f_E^{**} - \nu_I f_I^{**}\\
\partial_t f_R^{**} &= \nu_I f_I^{**},\\
f_J^{**}(w,0) &= f_J^*(w,\Delta t), \qquad J \in \mathcal C. 
\end{split}
\] 
The operators $\bar Q_J(\cdot)$ have been defined in \eqref{eq:FP_opinion} and are complemented by no-flux conditions \eqref{eq:BC_FP_opinion}. We highlight that, at time $t^{n+1}$,  the solution is given by the combination of the two introduced steps. In the following we will adopts a second-order Strang splitting method that is obtained as
\[
f_J^{n+1}  = \mathcal E_{\Delta t/2}(\mathcal O_{\Delta t}(\mathcal E_{\Delta t/2}(f_J^n(w)))), 
\]
for all $J\in \mathcal C$. As introduced above, the Fokker-Planck step is solved by a semi-implicit SP method, whereas the integration of the epidemiological step is performed with an RK4 method. In the following, we will always assume $\tau = 1$.

\subsection{Test 1. Consistency between the kinetic model and the moment system}
\label{test1}
In this test we focus on the case $\alpha=0$ in \eqref{eq:kin_cs,alpha=0} such that 
\[
K(f_S,f_I)(w,t) = \beta f_S(w,t)\rho_I(t),
\]
and we compare the evolution of the derived moment system  \eqref{eq:SEIR}-\eqref{eq:mean_op} derived with constant interaction function $P\equiv 1$.  To define the initial conditions, we introduce the distributions
\begin{equation}
\label{eq:gh}
g_0(w) = 
\begin{cases}
0 & w \in [0,1], \\
1 & \textrm{elsewhere},
\end{cases}
\qquad 
h_0(w) = 
\begin{cases}
1 & w \in [0,1], \\
0 & \textrm{elsewhere}. 
\end{cases}
\end{equation}
In the following, we will consider the initial distributions
\begin{equation}
\label{eq:ic_test1a}
\begin{split}
f_S(w,0) = \rho_S(0) g_0(w), \qquad f_E(w,0) = \rho_E(0)g_0(w), \\
f_I(w,0) = \rho_I(0) h_0(w), \qquad f_R(w,0) = \rho_R(0)h_0(w),
\end{split}
\end{equation}
with $\rho_E(0)=\rho_I(0) =\rho_R(0) = 0.05$  and $\rho_S(0) = 1-\rho_E(0)-\rho_I(0) - \rho_R(0)$. The introduced initial conditions describe a society where the subsceptible agents have negative initial opinions on protective behaviour.  We solve numerically \eqref{eq:kin_cs,alpha=0} over the time frame $[0,t_{\textrm{max}}]$ by introducing a time discretization $t^n = n\Delta t$, $\Delta t>0$, and $n = 0,\dots,T$ such that $T\Delta t = t_{\textrm{max}}$. We further introduce a grid $w_i\in [-1,1]$ with $w_{i+1}-w_i=\Delta w>0$, $i=1,\dots,N_w$. In Figure \ref{fig:1} we report the evolution of the approximated kinetic densities where we further considered the epidemiological parameters $\beta = 0.3$, $\nu_E=1/2$, $\nu_I = 1/12$, whereas the compromise propensities are given by $\gamma_S = \gamma_E = 0.2$, $\gamma_I = \gamma_R = 0.4$ and the diffusion constant is fixed as $\sigma^2 = 10^{-2}$. The chosen compromise propensities imply that agents in the compartments $\{S,E\}$ change opinions through interactions more strongly than agents in the compartments $\{I,R\}$.

\begin{figure}
	\centering
	\subfigure[$f_S(w,t)$]{
	\includegraphics[scale = 0.3]{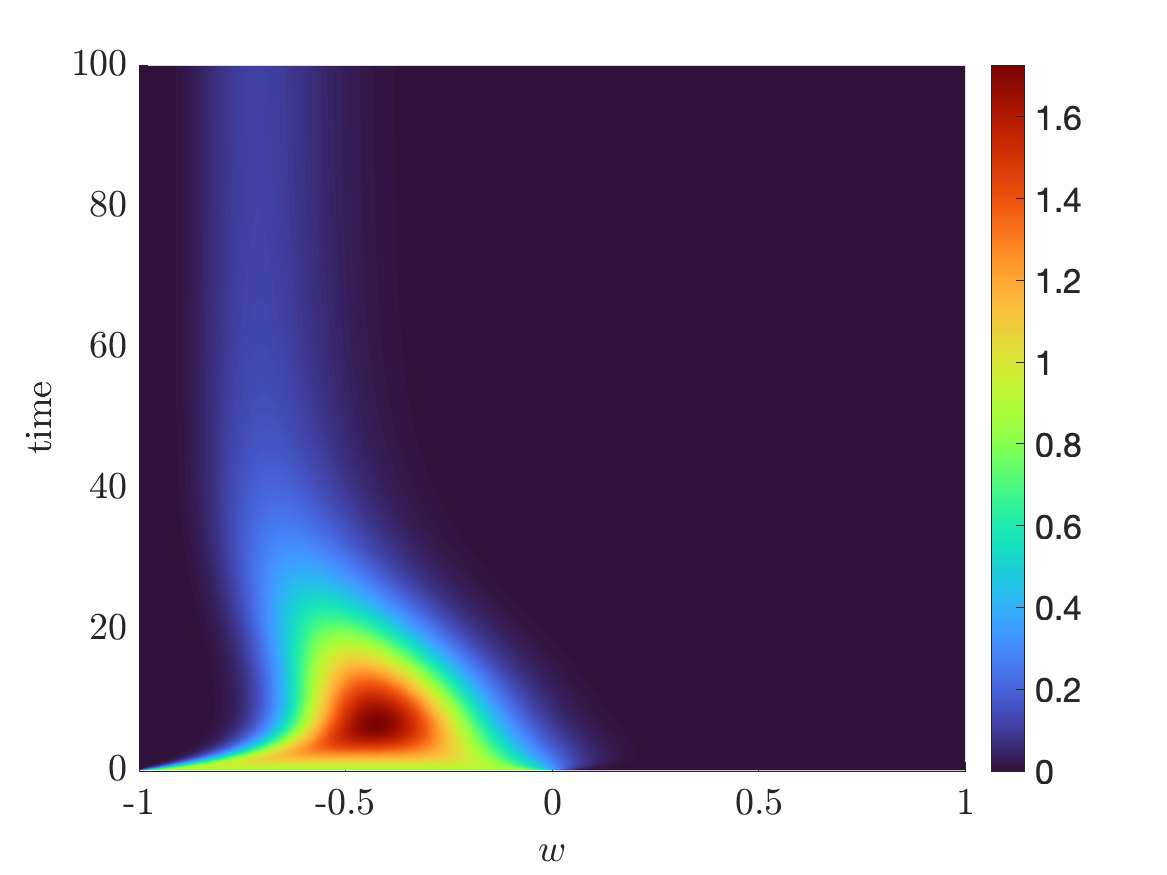}}
	\subfigure[$f_E(w,t)$]{
	\includegraphics[scale = 0.3]{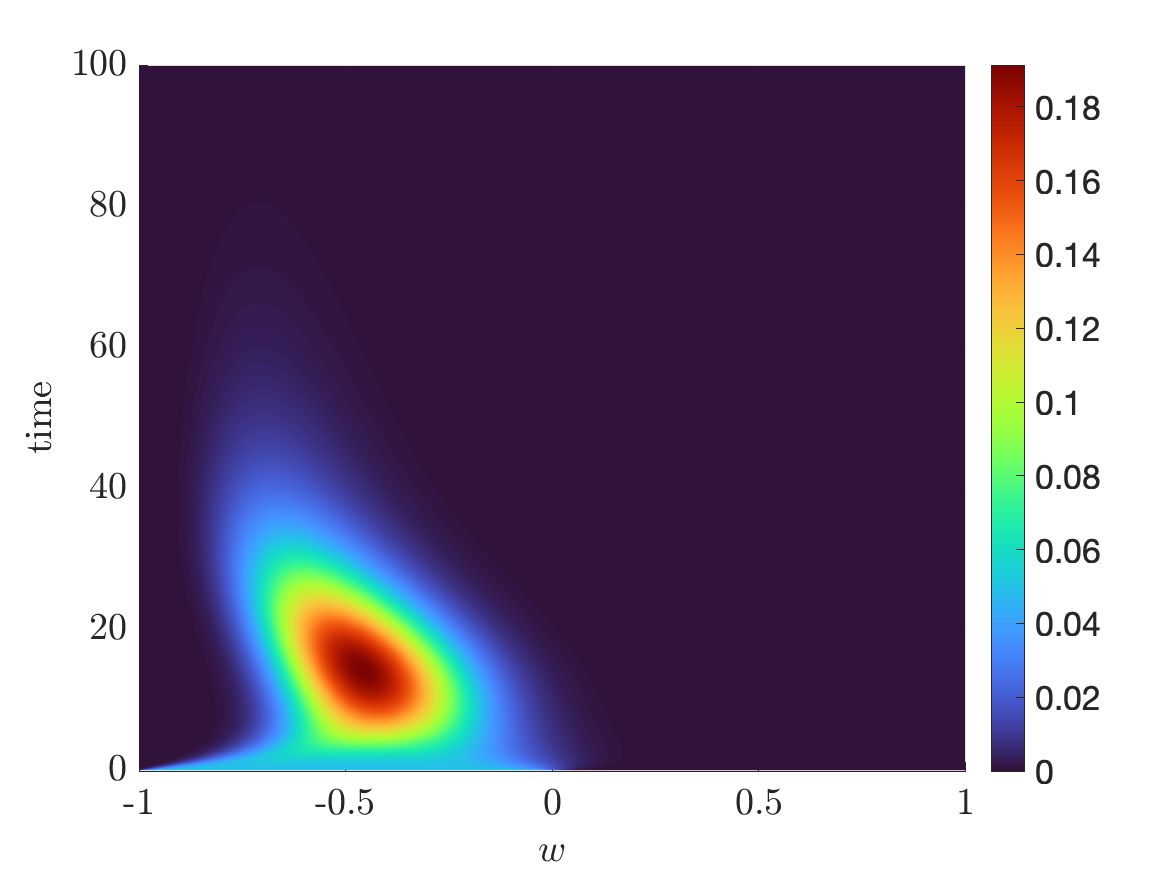}}	\\
	\subfigure[$f_I(w,t)$]{
	\includegraphics[scale = 0.3]{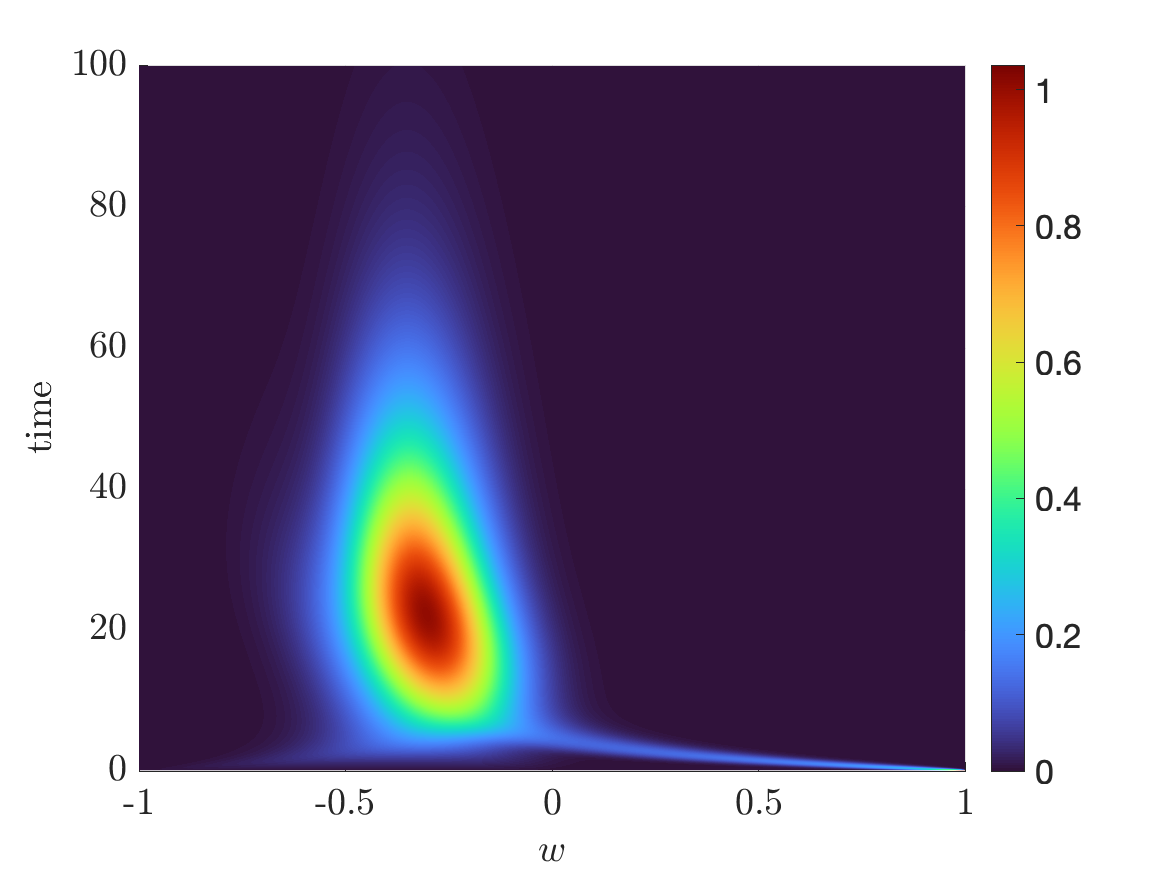}}
	\subfigure[$f_R(w,t)$]{
	\includegraphics[scale = 0.3]{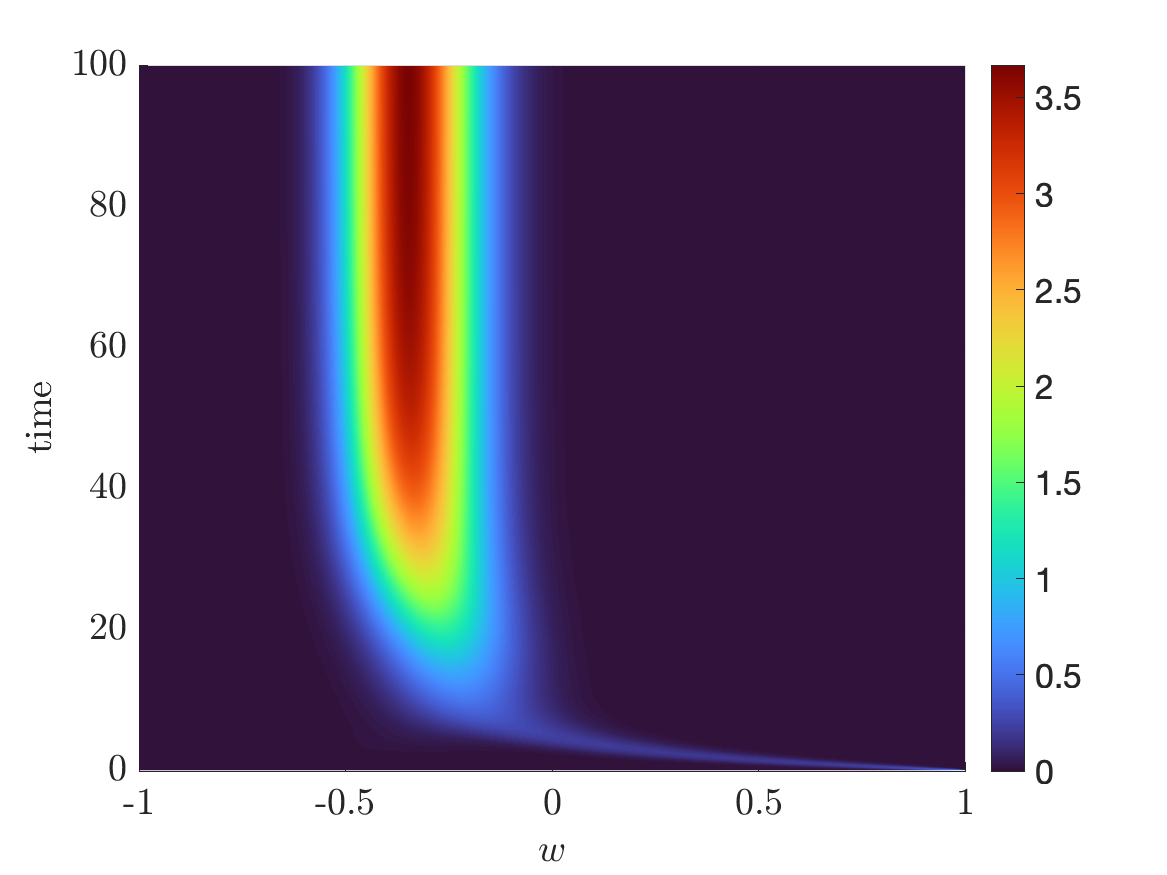}}
	\caption{\textbf{Test 1}. Evolution of the kinetic densities $f_J(w,t)$, $J \in \mathcal C$, over the time interval $[0,100]$, $\Delta t = 10^{-1}$. We considered the epidemic parameters $\beta = 0.3$, $\nu_E=1/2$, $\nu_I = 1/12$,  compromise propensities $\gamma_S = \gamma_E = 0.2$, $\gamma_I = \gamma_R = 0.4$, and diffusion constant $\sigma^2 = 10^{-2}$, the scaling parameter is $\tau = 1$. The discretization of the interval $[-1,1]$ is performed with $N_w = 501$ gridpoints. We fixed the initial condition as in \eqref{eq:ic_test1a} with $\rho_E(0)=\rho_I(0) =\rho_R(0) = 0.05$ and $\rho_S(0) = 1-\rho_E(0)-\rho_I(0)-\rho_R(0)$.  }
	\label{fig:1}
\end{figure}

We consider also the initial distributions 
\begin{equation}
\label{eq:ic_test1b}
\begin{split}
f_S(w,0) = \rho_S(0) h_0(w), \qquad f_E(w,0) = \rho_E(0)h_0(w), \\
f_I(w,0) = \rho_I(0) h_0(w), \qquad f_R(w,0) = \rho_R(0)h_0(w),
\end{split}
\end{equation}
with $\rho_E(0) = \rho_I(0) = \rho_R(0) = 0.05$ and $\rho_S(0) = 1-\rho_E(0)-\rho_I(0)-\rho_R(0)$. The defined initial conditions describe a society where all the agents share positive opinions towards the adoption of protective behaviour. We consider the same epidemiological parameters of the previous test and the same compromise propensities  and diffusion constant.
In Figure \ref{fig:2} we compare the evolution of the computed observable quantities obtained as $\int_{-1}^1 w^r f_J(w,t)dw$ with $\rho_J(t)$, $\rho_Jm_J(t)$ defined in the moment system \eqref{eq:SEIR}-\eqref{eq:mean_op} with the two sets of initial conditions. We may observe good agreement between the approximated evolution of observable quantities and the moment system. At the epidemiological level we may observe that, due to the hypothesis $\alpha=0$ which neglects opinion effects in transition between compartmens, the evolution of mass fractions $\rho_J(t)$ do not change in view of the two considered initial conditions. Anyway, thanks to the proposed kinetic approach we may obtain details on the evolution of mean opinions in each compartment. 
%
%\begin{figure}
%	\centering
%	\subfigure[$f_S(w,t)$]{
%	\includegraphics[scale = 0.3]{fig/fS_tau1}}
%	\subfigure[$f_E(w,t)$]{
%	\includegraphics[scale = 0.3]{fig/fE_tau1}}	\\
%	\subfigure[$f_I(w,t)$]{
%	\includegraphics[scale = 0.3]{fig/fI_tau1}}
%	\subfigure[$f_R(w,t)$]{
%	\includegraphics[scale = 0.3]{fig/fR_tau1}}
%	\caption{\textbf{Test 1}. Evolution of the kinetic densities $f_J(w,t)$, $J \in \mathcal C$, over the time interval $[0,100]$, $\Delta t = 10^{-1}$. We considered the epidemic parameters $\beta = 0.3$, $\nu_E=1/2$, $\nu_I = 1/12$,  compromise propensities $\gamma_S = \gamma_E = 0.2$, $\gamma_I = \gamma_R = 0.4$, and diffusion constant $\sigma^2 = 10^{-2}$, the scaling parameter is $\tau = 1$. The discretization of the interval $[-1,1]$ is performed with $N_w = 501$ gridpoints. We fixed the initial condition as in \eqref{eq:ic_test1a} with $\rho_E(0)=\rho_I(0) =\rho_R(0) = 0.05$ and $\rho_S(0) = 1-\rho_E(0)-\rho_I(0)-\rho_R(0)$.  }
%	\label{fig:1}
%\end{figure}

\begin{figure}
	\centering
	\includegraphics[scale = 0.3]{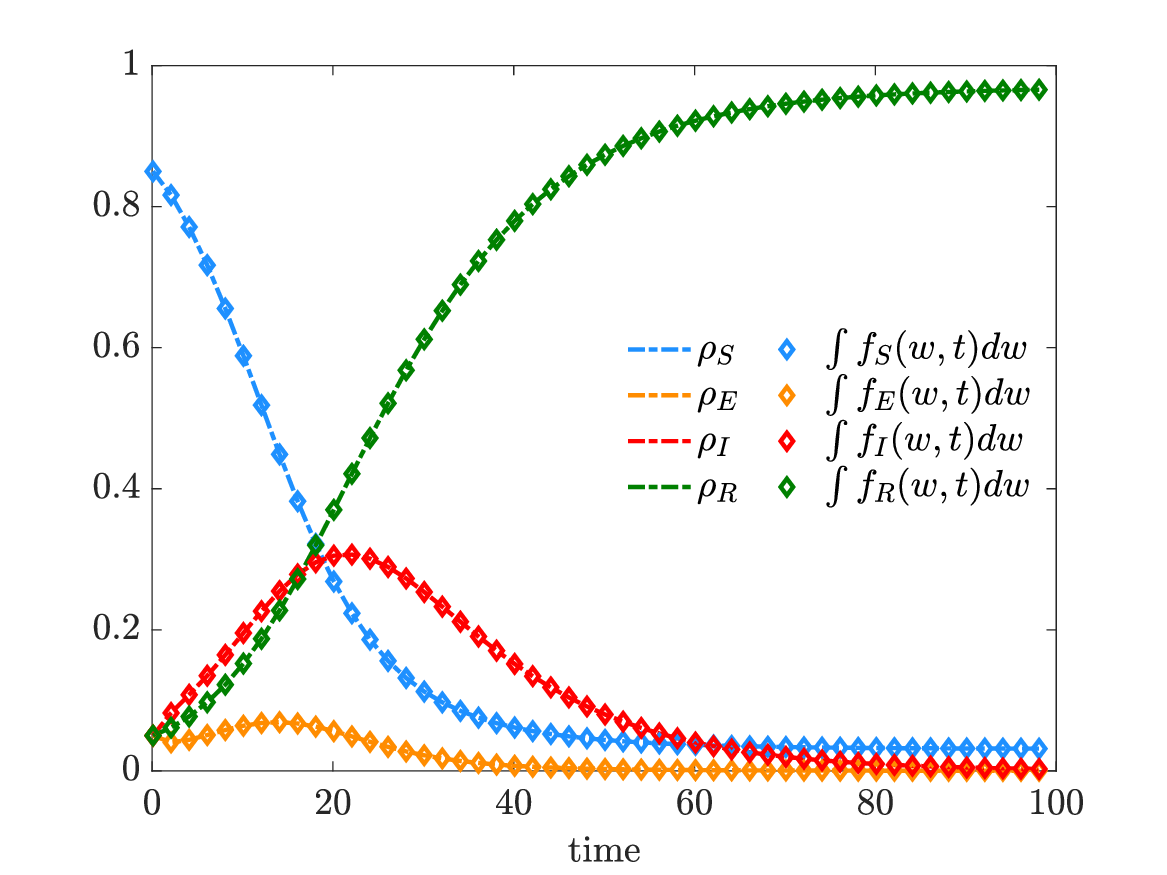}
	\includegraphics[scale = 0.3]{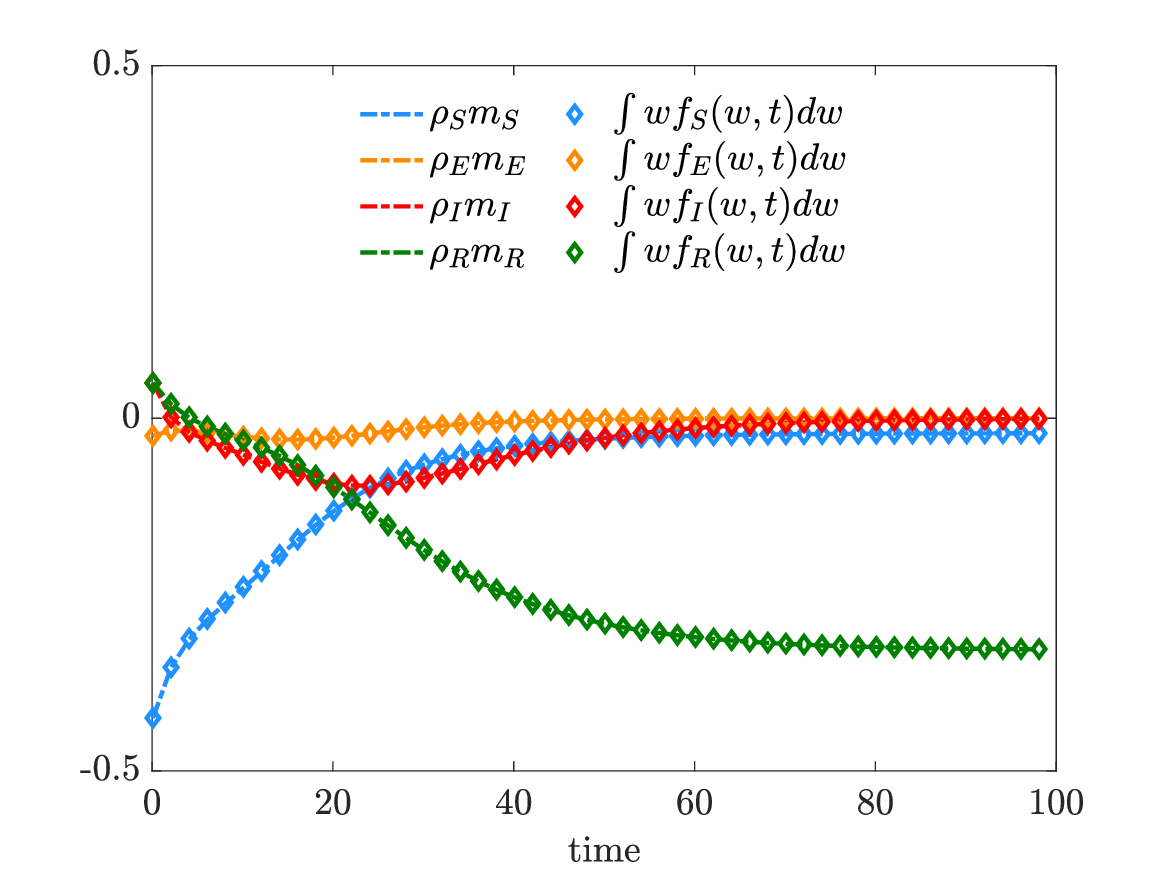} \\
	\includegraphics[scale = 0.3]{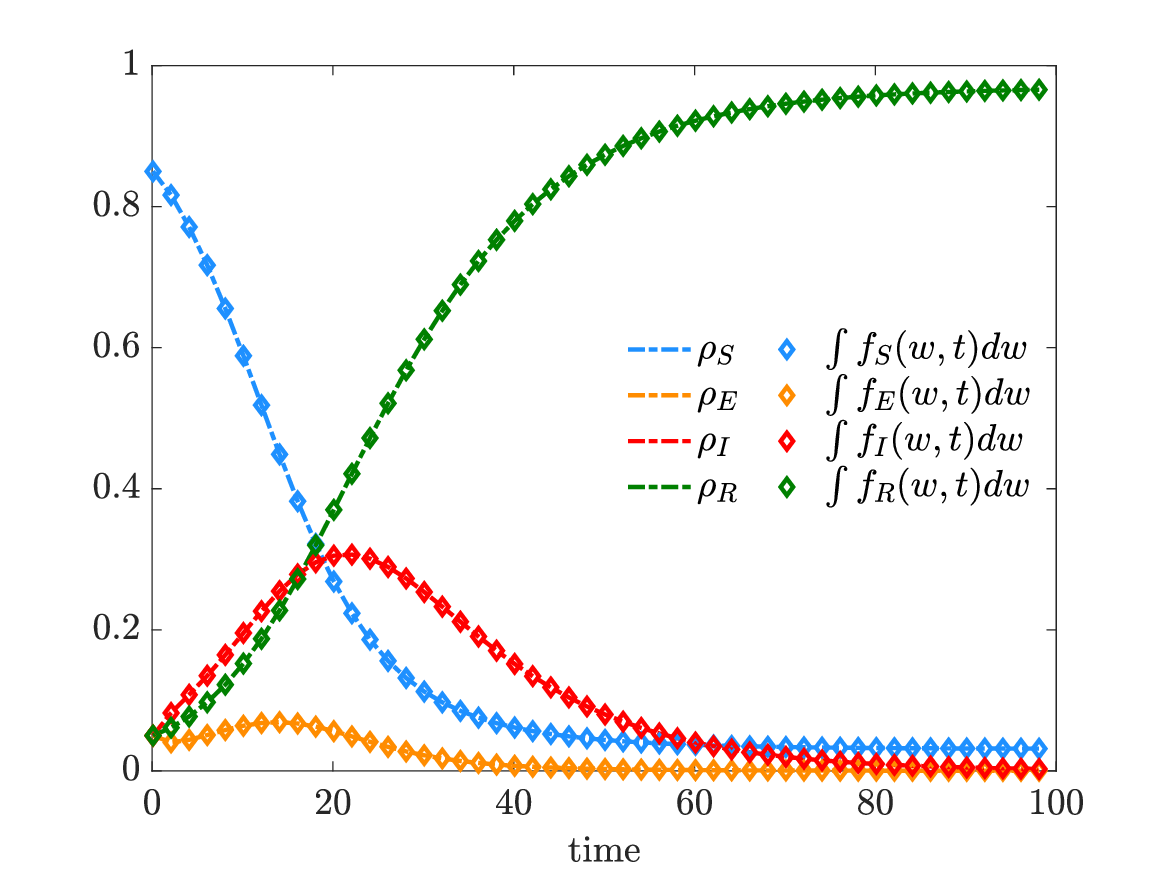}
	\includegraphics[scale = 0.3]{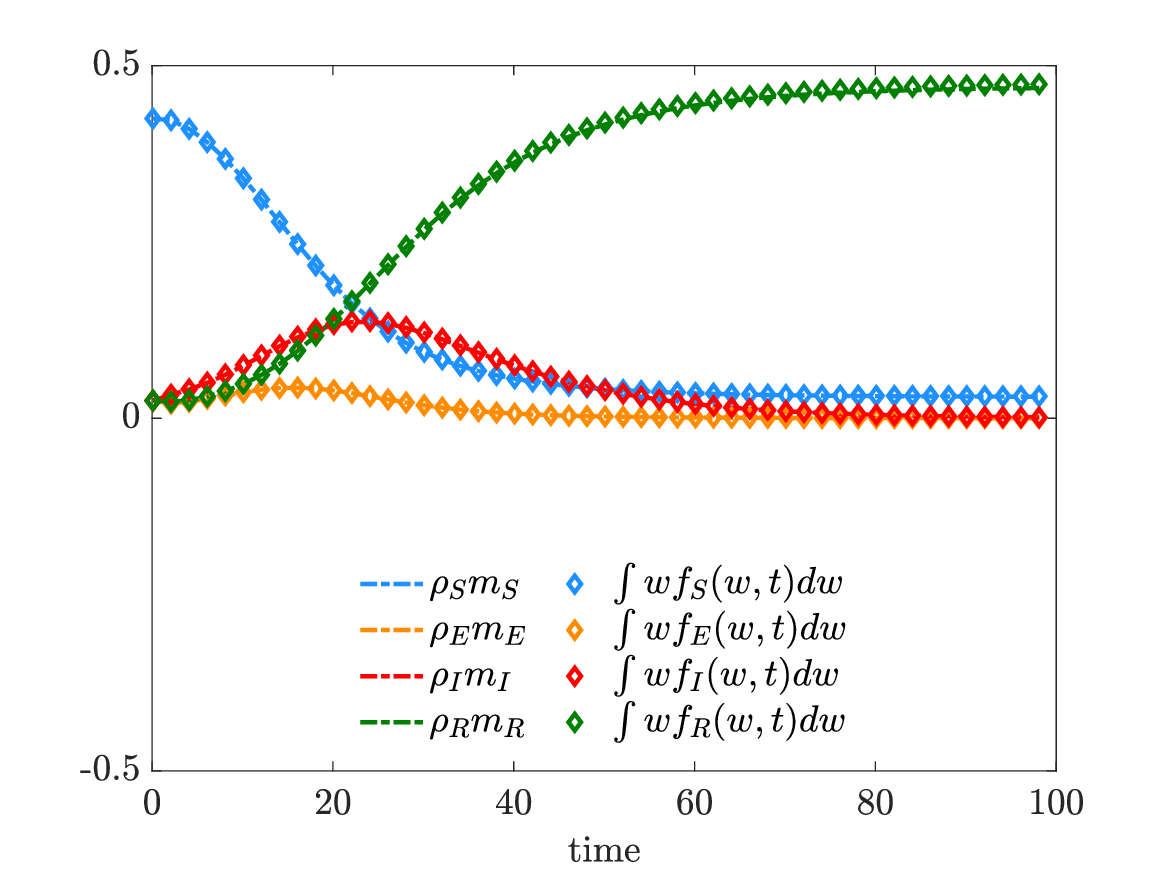}
	\caption{\textbf{Test 1}. Comparison between the evolution of $\rho_J$, $m_J\rho_J$ solution to the \rev{moments} system \eqref{eq:SEIR}-\eqref{eq:mean_op} and mass and momentum obtained from the numerical solution to \eqref{eq:kin_cs,alpha=0}. Top row: initial condition defined in \eqref{eq:ic_test1a}. Bottom row: initial condition defined in \eqref{eq:ic_test1b}. The epidemiological and numerical parameters have been fixed as in Figure \ref{fig:1}. }
	\label{fig:2}
\end{figure}

\subsection{Test 2: Opinion-dependent incidence rate}
In this test we investigate the influence of the initial conditions in a kinetic compartmental model with opinion-dependent local incidence rate of the form \eqref{def:K,alpha=1}. In particular, consider $\kappa(w,w_*)$ in \eqref{def:kappa} with $\alpha=1$ and we integrate the kinetic model \eqref{eq:kin_cs,alpha=1} on the time frame $[0,100]$, $\Delta t = 10^{-1}$ by considering a positively skewed population, synthesized in the following initial condition
\[
\begin{split}
\textrm{(IC1):}\quad&f_S(w,0) = \rho_S(0)h_1(w),\qquad f_E(w,0)=\rho_E(0)h_1(w), \\
&f_I(w,0) = \rho_I(0)h_1(w),\qquad f_R(w,0)=\rho_R(0)h_1(w), 
\end{split}
\]
with a negatively skewed population, obtained by considering the following initial condition
\[
\begin{split}
\textrm{(IC2):}\quad&f_S(w,0) = \rho_S(0)g_1(w),\qquad f_E(w,0)=\rho_E(0)g_1(w), \\
&f_I(w,0) = \rho_I(0)h_1(w),\qquad f_R(w,0)=\rho_R(0)h_1(w). 
\end{split}
\]
where 
\[
g_1(w)
\begin{cases}
2 & w \in [-\frac{1}{2},-1] \\
0 & \textrm{elsewhere}
\end{cases}, \qquad
h_1(w)
\begin{cases}
2 & w \in [\frac{1}{2},1]\\
0 &  \textrm{elsewhere}.
\end{cases}
\]
In both cases we fixed $\rho_E(0) = \rho_I(0) = \rho_R(0) = 0.05$ and $\rho_S(0) = 1-\rho_E(0)-\rho_I(0)-\rho_R(0)$. In Figure \ref{fig:compar_alpha1} we depict the evolution of kinetic mass and momentum obtained from \eqref{eq:kin_cs,alpha=1} with respectively initial conditions $\textrm{(IC1)}$ or $\textrm{(IC2)}$. We can observe that, at variance with what we obtained in Section \ref{test1}, an opinion-dependent incidence rate effectively quantifies the impact of opinion-type dynamics on the epidemic evolution. Indeed, in the case $\textrm{IC1}$, where the agents' opinions tends to align towards protective behaviours, the transmission dynamics become sensitive to the introduced social dynamics. 

\begin{figure}
	\centering
	\includegraphics[scale = 0.3]{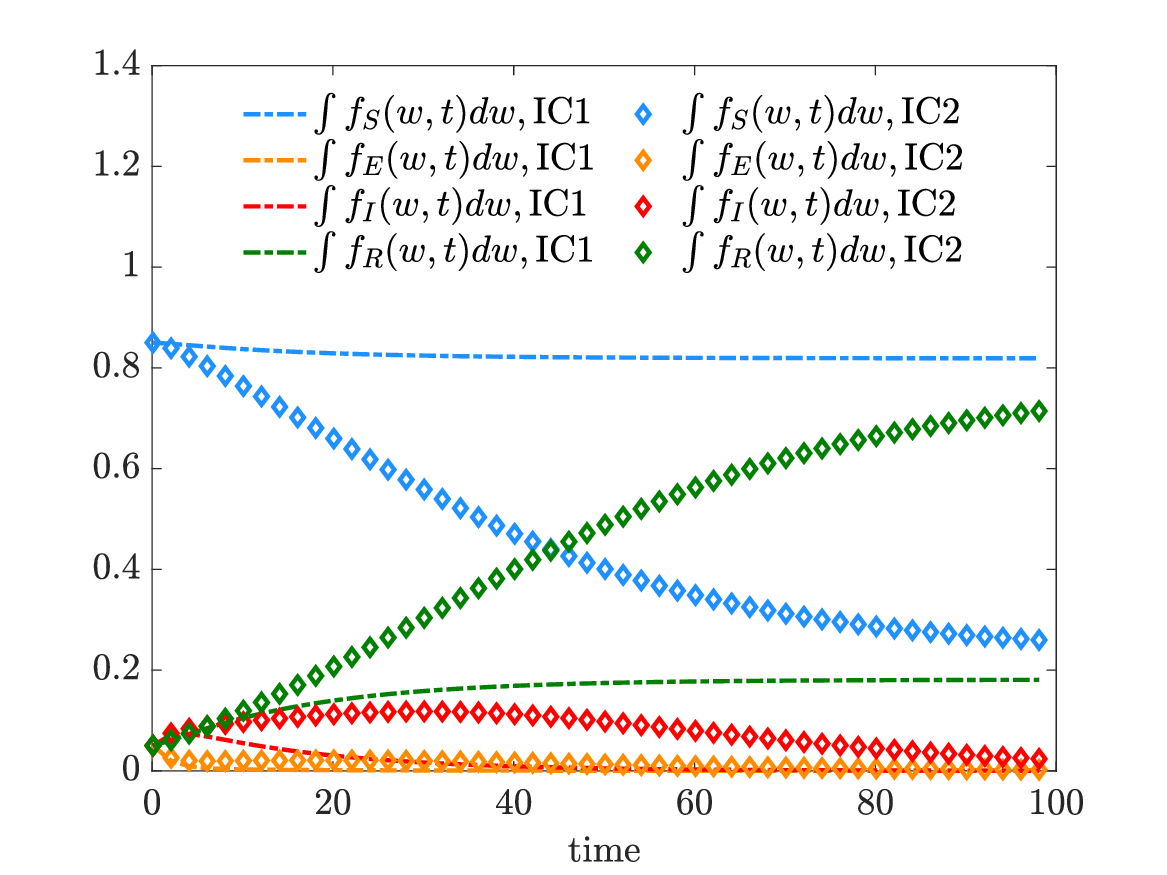}
	\includegraphics[scale = 0.3]{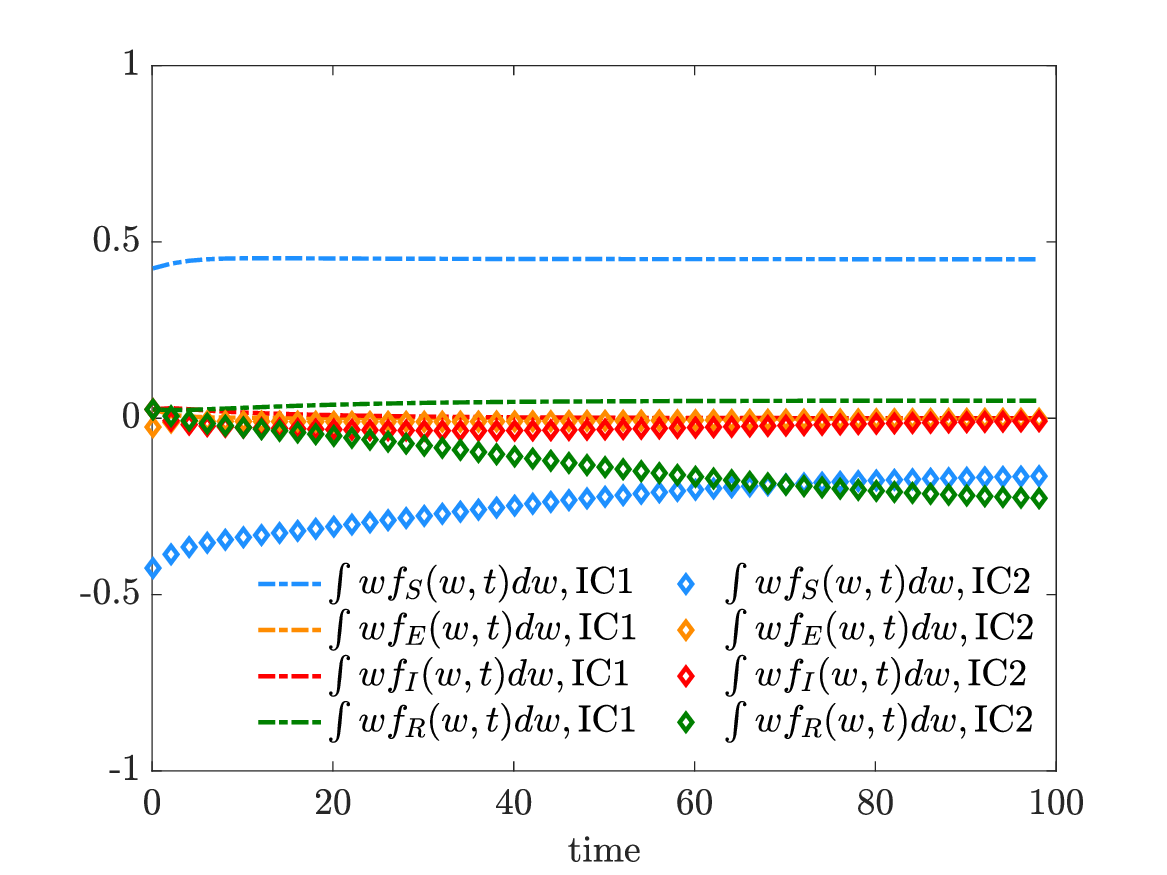} \\
	\caption{\textbf{Test 2}. Comparison between the evolution of $\rho_J$, $m_J\rho_J$ solution to the moment system \eqref{eq:mass,alpha=1}-\eqref{eq:mean_op,alpha=1} and mass and momentum obtained from the numerical solution to kinetic system \eqref{eq:kin_cs,alpha=1}.
		The epidemiological and numerical parameters have been fixed as in Figure \ref{fig:1} and the initial conditions as in $\textrm{(IC1)}$-$\textrm{(IC2)}$. }
	\label{fig:compar_alpha1}
\end{figure}

\subsection{Test 3. Impact of opinion clusters on the epidemic dynamics}

In this test we focus on the effects of the asymptotic formation of opinion clusters as discussed in Section \ref{subs:stationary_sol}. We consider the epidemiological parameters defined in the previous tests, $\beta = 0.3$, $\nu_I = 1/12$, $\nu_E = 1/2$. Furthermore we fix as initial conditions the one defined in \eqref{eq:ic_test1a} with $\rho_E(0) =\rho_I(0) = \rho_R(0) = 0.05$ and $\rho_S(0) = 1-\rho_E(0)-\rho_I(0)-\rho_R(0)$. The opinion formation dynamics is solved through a semi-implicit SP scheme over an uniform grid for $[-1,1]$ composed by $N_w = 501$ gridpoints and a time discretization of the time horizon $[0,100]$ obtained with $\Delta t = 10^{-1}$. The parameters characterizing the opinion dynamics are $\gamma_S = \gamma_E = 0.8$, $\gamma_I = \gamma_R = 0.2$ such that the susceptible and  the exposed populations, which are initially skewed towards negative opinions, weights more opinions of other compartments as in \eqref{eq:binary_rules}. Furthermore we consider a diffusion $\sigma^2 = 10^{-3}$. We remark that these choices are coherent with the ones adopted to obtain Figure \ref{fig:bimodal_cases} $(b)$.

In Figure \ref{fig:test3} we present the evolution of the total density $f(w,t) = \sum_{J \in \mathcal C} f_J(w,t)$ for several choices of the parameter $\alpha\ge0$ in the local incidence rate expressed by \eqref{def:kappa}. In the regime $\alpha=0$, as highlighted in Section \ref{subs:stationary_sol}, we detect the formation of clusters. We may observe how opinion clusters appear also in regimes $\alpha>0$ and may lead to stationary profiles of different nature with respect to the one obtained with $\alpha=0$. The emergence of opinion clusters can be therefore obtained in more general regimes where the transmission dynamics depends on the behaviour of infected agents. 

\begin{figure}
\centering
\includegraphics[scale = 0.225]{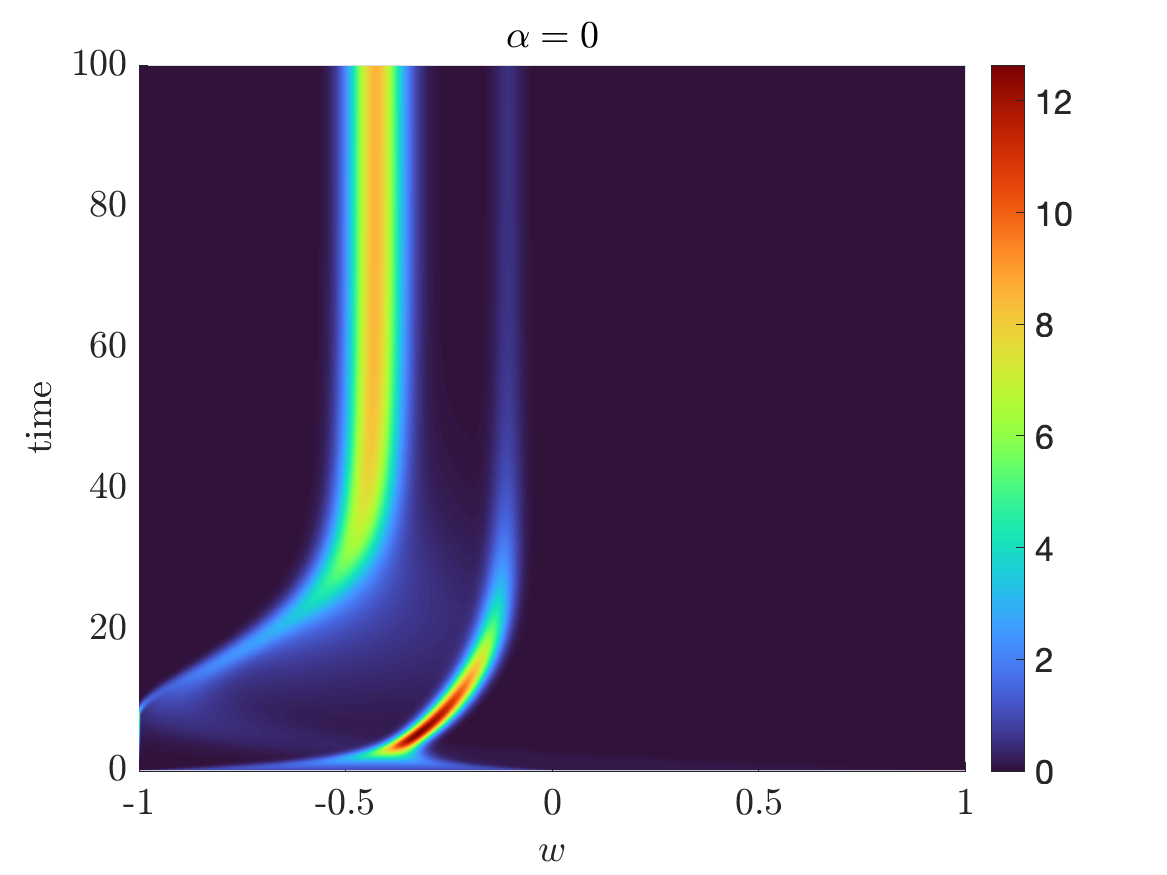}
\includegraphics[scale = 0.225]{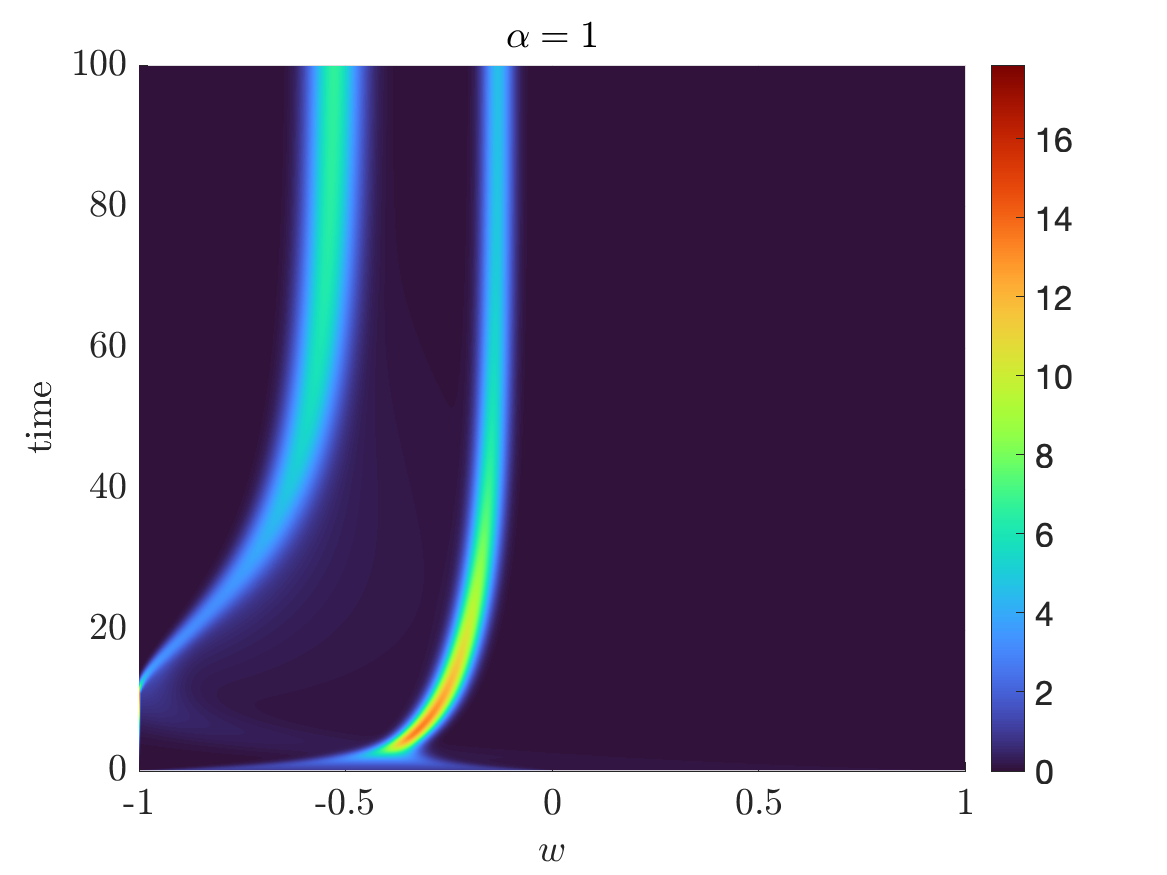}
\includegraphics[scale = 0.225]{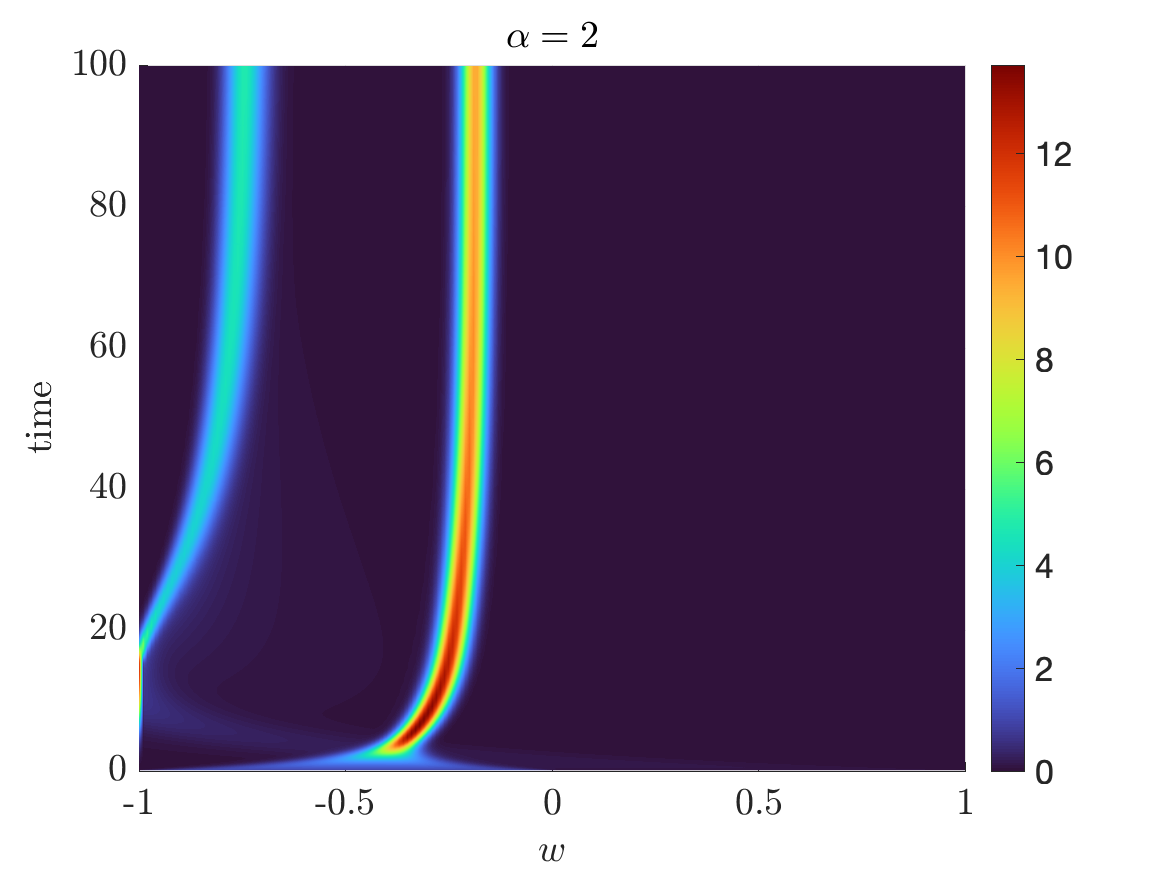}\\
\caption{\textbf{Test 3}. Evolution of the density $f(w,t) = \sum_{J\in \mathcal C}f_J(w,t)$ over the time horizon $[0,100]$, where $f_J(w,t)$ are the numerical solutions to\eqref{ivp:FP_system} with $\beta = 0.3$, $\nu_I = 1/12$, $\nu_E = 1/2$ and $\gamma_S = \gamma_E = 0.8$, $\gamma_I = \gamma_R = 0.2$. We considered $\alpha = 0$ (left), $\alpha = 1$ (center), $\alpha = 2$ (right). Initial condition as in \eqref{eq:ic_test1a}.   }
\label{fig:test3}
\end{figure}

As discussed in the case of explicitly solvable stationary solution, the value of the diffusion $\sigma^2>0$ is of great importance to determine the emergence of opinion clusters and of polarization. The impact of the steady state on the epidemic dynamics is studied in Figure \ref{fig:rhoR_test3} where we integrate \eqref{ivp:FP_system} over the time integral $[0,T]$, $T = 200$, for several values of the diffusion constant $\sigma^2 \in [10^{-4}, 0.2]$ and we consider the large-time mass of recovered individuals $\rho_R(T) = \int_{\mathcal I} f_R(w,T)dw$ for several values of $\alpha = 0,1,2$. As before, we fixed the compromise parameters $\gamma_S = \gamma_E = 0.8$, $\gamma_I = \gamma_R = 0.2$. We may observe how, under the aforementioned conditions, large values of the diffusion parameters trigger a higher number of recovered individuals. This is due to the emergence of polarization in the society which is driven towards negative opinions under the considered initial condition. 

\begin{figure}
\centering
\includegraphics[scale = 0.225]{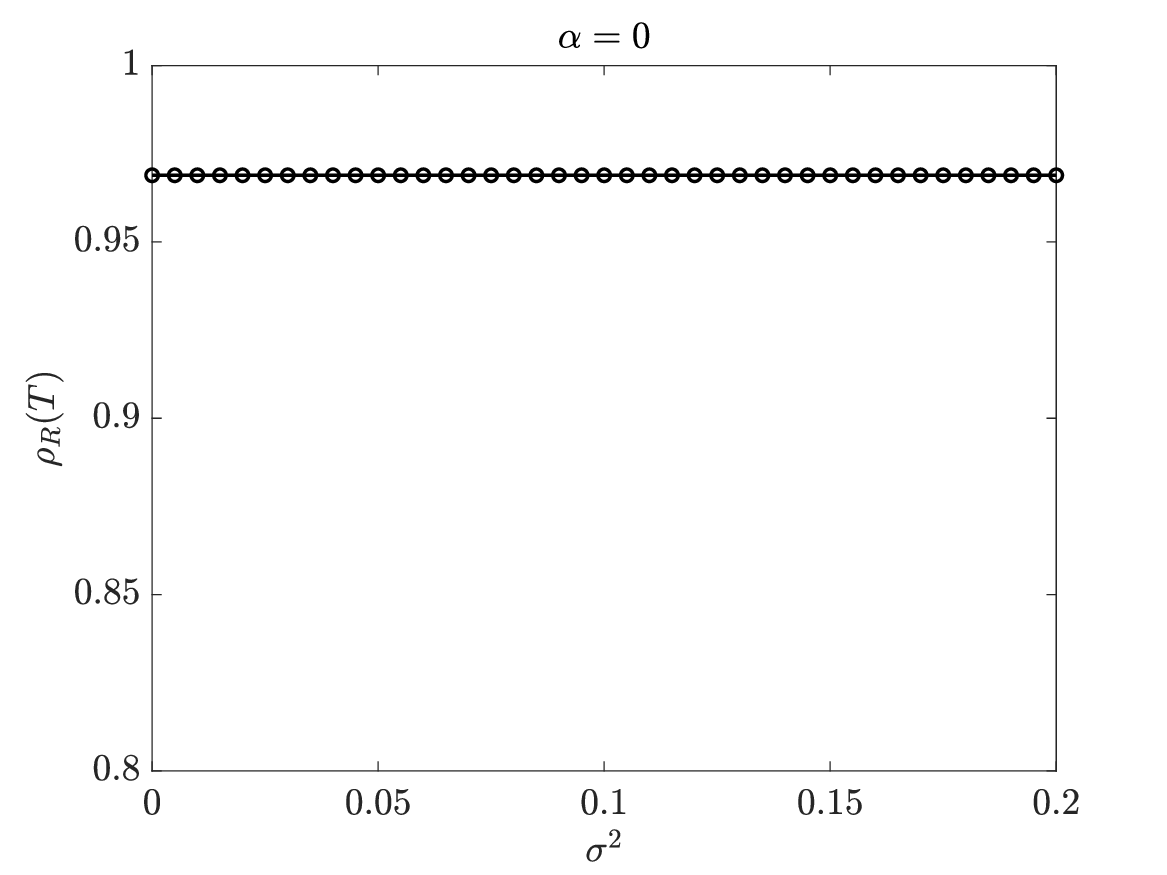}
\includegraphics[scale = 0.225]{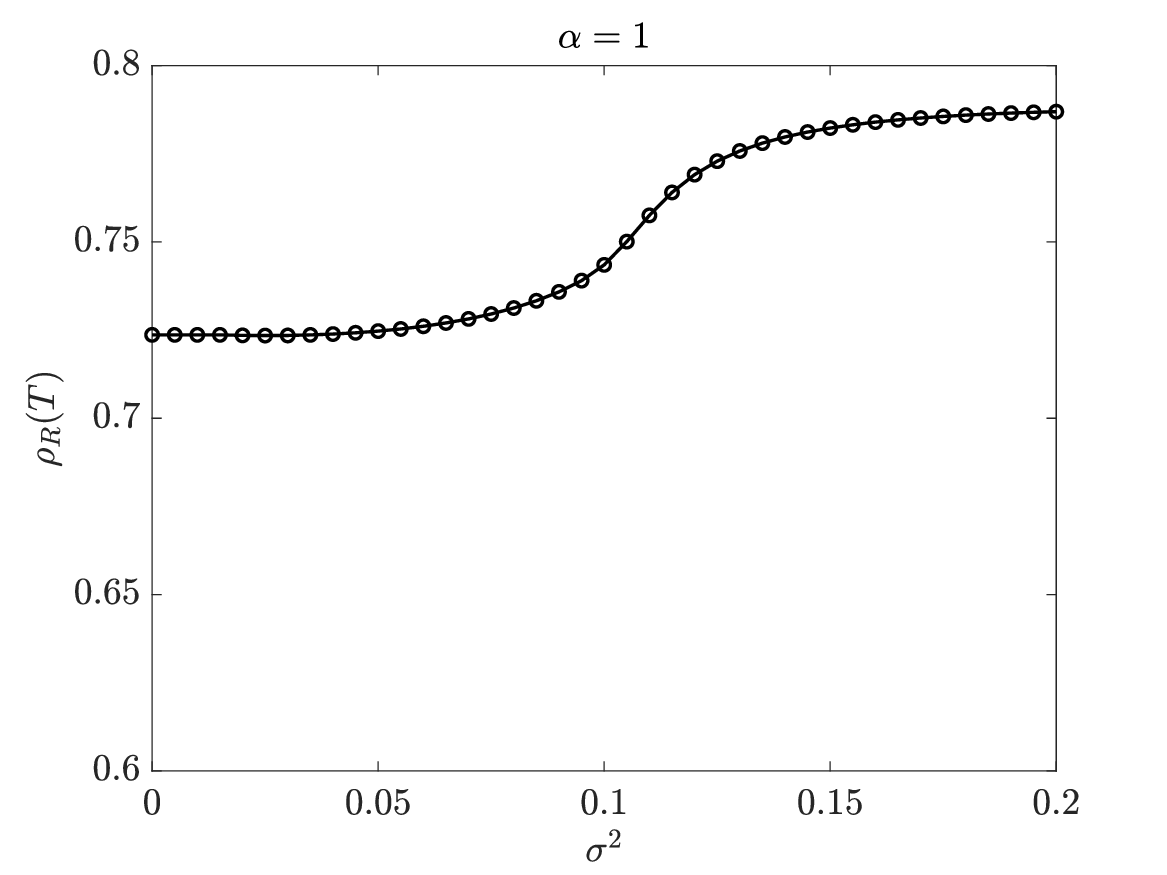}
\includegraphics[scale = 0.225]{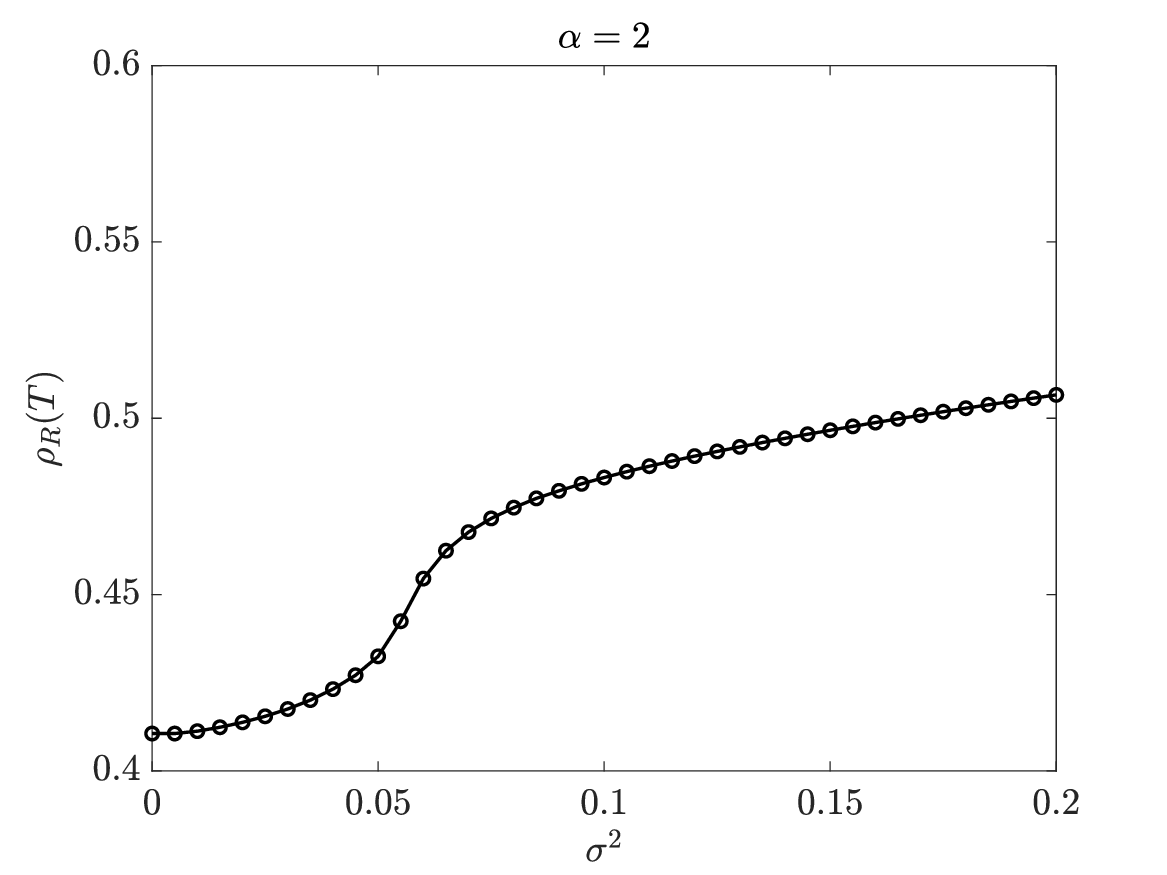}
\caption{\textbf{Test 3}: we depict $\rho_R(T)  = \int_{\mathcal I} f_R(w,T)dw$ with $T = 200$ obtained with numerical integration of \eqref{ivp:FP_system} with initial condition \eqref{eq:ic_test1a}, $\beta = 0.3$, $\nu_I = 1/12$, $\nu_E = 1/2$. Numerical integration performed over $[0,T]$, $T = 200$ with $\Delta t = 10^{-1}$ and a discretization of $\mathcal I$ obtained with $N_w = 501$ gridpoints.   }
\label{fig:rhoR_test3}
\end{figure}

%%%%%%%%%%%%%%%%%%%%%%%%%%%%%%%%%%%%%%%%%%%%%%%%%%%%%%%%%%%%%%%%%%%%%%%%%%%%%%%%%%%%%%%%%%%%%%%
\section*{Conclusion}

In this work we \rev{focused} on the development of a kinetic model for the interplay between opinion and epidemic dynamics. The study of the impact of opinion-type phenomena in the evolution of infectious diseases \rev{can} be suitably linked with vaccine hesitancy. Recently this phenomenon emerged in close connection with the evolution of pandemics. In this paper, we \rev{studied} the evolution of opinion densities by means of a compartmental kinetic model where the microscopic interaction dynamics is supposed heterogeneous with respect to the agents' \rev{compartments}. Through explicit computations, we showed the formation of asymptotic clusters for a surrogate Fokker-Planck-type model under the assumption that the transmission dynamics is independent \rev{of} opinion-formation processes. Furthermore, we studied positivity and uniqueness of the solution of the model. Numerical experiments confirm the ability of the approach to force clusters formation also in the case of opinion-dependent  transmission dynamics.  Future studies will aim \rev{at defining} the parameters of the model by resorting to existing experimental data.

\section*{Acknowledgments}
This work has been written within the activities of GNFM group of INdAM (National Institute of High Mathematics). MZ acknowledges the support of MUR-PRIN2020 Project No.2020JLWP23 (Integrated Mathematical Approaches to Socio-Epidemiological Dynamics). This work has been supported by NextGenerationEU, MZ wishes to acknowledge the National Centre for HPC, Big Data and Quantum Computing (CN00000013).
The work of SB is funded by the Deutsche Forschungsgemeinschaft (DFG, German Research Foundation) – 320021702/GRK2326 – Energy, Entropy, and Dissipative Dynamics (EDDy).

%\textcolor{red}{TBA}
%\textcolor{purple}{Varie proposte:
%\begin{itemize}
%	\item Remark 4.2 con caso $\alpha=1$. Grazie a \cite{Zan23} abbiamo sistema chiuso per evoluzione delle masse ed opinioni medie locali (per $\tau \ll1$). Quindi, secondo me, in questo regime, siamo in grado di determinare $\rho^{\infty}_S, m^{\infty}_S, etc$ e ripetere i conti nel caso $\alpha=1$.
%\end{itemize}
%}	

%%%%%%%%%%%%%%%%%%%%%%%%%%%%%%%%%%%%%%%%%%%%%%%%%%%%%%%%%%%%%%%%%%%%%%%%%%%%%%%%%%%%%%%%%%%%%%%

\end{document}